\numberwithin{equation}{section}
\numberwithin{figure}{section}
\theoremstyle{plain}
\newtheorem{thm}{\protect\theoremname}
  \theoremstyle{definition}
  \theoremstyle{plain}
  \newtheorem{lem}[thm]{\protect\lemmaname}
  \theoremstyle{definition}
  \newtheorem{problem}[thm]{\protect\problemname}
  \theoremstyle{remark}
  \theoremstyle{remark}
  \newtheorem*{rem*}{\protect\remarkname}
 \theoremstyle{plain}
\newcommand{\bra}[1]{{\left\langle{#1}\right\vert}}
\newcommand{\ket}[1]{{\left\vert{#1}\right\rangle}}
\newcommand{\qw}[1][-1]{\ar @{-} [0,#1]}
\newcommand{\qwx}[1][-1]{\ar @{-} [#1,0]}
\newcommand{\cw}[1][-1]{\ar @{=} [0,#1]}
\newcommand{\gate}[1]{*+<.6em>{#1} \POS ="i","i"+UR;"i"+UL **\dir{-};"i"+DL **\dir{-};"i"+DR **\dir{-};"i"+UR **\dir{-},"i" \qw}
\newcommand{\meter}{*=<1.8em,1.4em>{\xy ="j","j"-<.778em,.322em>;{"j"+<.778em,-.322em> \ellipse ur,_{}},"j"-<0em,.4em>;p+<.5em,.9em> **\dir{-},"j"+<2.2em,2.2em>*{},"j"-<2.2em,2.2em>*{} \endxy} \POS ="i","i"+UR;"i"+UL **\dir{-};"i"+DL **\dir{-};"i"+DR **\dir{-};"i"+UR **\dir{-},"i" \qw}
\newcommand{\measure}[1]{*+[F-:<.9em>]{#1} \qw}
\newcommand{\control}{*!<0em,.025em>-=-<.2em>{\bullet}}
\newcommand{\ctrl}[1]{\control \qwx[#1] \qw}
\newcommand{\targ}{*+<.02em,.02em>{\xy ="i","i"-<.39em,0em>;"i"+<.39em,0em> **\dir{-}, "i"-<0em,.39em>;"i"+<0em,.39em> **\dir{-},"i"*\xycircle<.4em>{} \endxy} \qw}
\newcommand{\multigate}[2]{*+<1em,.9em>{\hphantom{#2}} \POS [0,0]="i",[0,0].[#1,0]="e",!C *{#2},"e"+UR;"e"+UL **\dir{-};"e"+DL **\dir{-};"e"+DR **\dir{-};"e"+UR **\dir{-},"i" \qw}
\newcommand{\ghost}[1]{*+<1em,.9em>{\hphantom{#1}} \qw}
\newcommand{\gategroup}[6]{\POS"#1,#2"."#3,#2"."#1,#4"."#3,#4"!C*+<#5>\frm{#6}}
\newcommand{\rstick}[1]{*!L!<-.5em,0em>=<0em>{#1}}
\newcommand{\lstick}[1]{*!R!<.5em,0em>=<0em>{#1}}
\newcommand{\ustick}[1]{*!D!<0em,-.5em>=<0em>{#1}}
\newcommand{\dstick}[1]{*!U!<0em,.5em>=<0em>{#1}}
\newcommand{\Qcircuit}{\xymatrix @*=<0em>}
\newcommand{\nc}{\newcommand}
\nc{\rnc}{\renewcommand}
\newcommand{\braket}[2]{\langle #1|#2\rangle}
\nc{\vev}[1]{\langle#1\rangle}
\nc{\grad}{{\vec{\nabla}}}
\DeclareMathOperator{\tr}{tr}
\newcommand{\be}{\begin{equation}}
\newcommand{\ee}{\end{equation}}
\newcommand{\bea}{\begin{eqnarray}}
\newcommand{\eea}{\end{eqnarray}}
\newcommand{\nn}{\nonumber}
\newcommand{\bi}{\begin{itemize}}
\newcommand{\ei}{\end{itemize}}
\newcommand{\bn}{\begin{enumerate}}
\newcommand{\en}{\end{enumerate}}
\def\beas#1\eeas{\begin{eqnarray*}#1\end{eqnarray*}}
\def\ba#1\ea{\begin{align}#1\end{align}}
\nc{\bas}{\[\begin{aligned}}
\nc{\eas}{\end{aligned}\]}
\nc{\bpm}{\begin{pmatrix}}
\nc{\epm}{\end{pmatrix}}
\def\nn{\nonumber}
\def\L{\left} 
\def\R{\right}
\newtheorem{cor}[thm]{Corollary}
\newtheorem*{rep@theorem}{\rep@title}
\newcommand{\newreptheorem}[2]{%
\newenvironment{rep#1}[1]{%
 \def\rep@title{#2 \ref{##1}}%
 \begin{rep@theorem}}%
 {\end{rep@theorem}}}
\def\benum{\begin{enumerate}}
\def\eenum{\end{enumerate}}
\def\bit{\begin{itemize}}
\def\eit{\end{itemize}}
\def\bdesc{\begin{description}}
\def\edesc{\end{description}}
\nc{\todo}[1]{\textcolor{red}{todo: #1}}
\def\begsub#1#2\endsub{\begin{subequations}\label{eq:#1}\begin{align}#2\end{align}\end{subequations}}
\nc\qand{\qquad\text{and}\qquad}
\nc\mnb[1]{\medskip\noindent{\bf #1}}
\nc\mn{\medskip\noindent}
\nc{\nl}{\nn \\ &=}  
\nc{\nnl}{\nn \\ &}  
\nc{\fot}{\frac{1}{2}} 
\newcommand{\ben}{\begin{enumerate}}
\newcommand{\een}{\end{enumerate}}
\nc{\mc}{\mathcal}
\nc{\beq}{\begin{equation}}
\nc{\eeq}{\end{equation}}
  \providecommand{\algorithmname}{Algorithm}
  \providecommand{\definitionname}{Definition}
  \providecommand{\lemmaname}{Lemma}
  \providecommand{\problemname}{Problem}
  \providecommand{\remarkname}{Remark}
\providecommand{\theoremname}{Theorem}
\theoremstyle{plain}
\newtheorem{obs}[thm]{Observation}
\theoremstyle{definition}
\newtheorem{alg}[thm]{\protect\algorithmname}
\theoremstyle{remark}
\newtheorem*{alg_in}{Input}
\newcommand{\Expect}{{\rm I\kern-.3em E}}
\author[1]{\normalsize Cedric Yen-Yu Lin\thanks{cedricl@mit.edu}}
\author[1]{\normalsize Han-Hsuan Lin\thanks{hanmas@mit.edu}}
\affil[1]{\small \emph{Center for Theoretical Physics, Massachusetts Institute of Technology}}
\title{Upper bounds on quantum query complexity inspired by the Elitzur-Vaidman bomb tester}
\date{ }
\begin{document}


\maketitle

\begin{abstract}
Inspired by the Elitzur-Vaidman bomb testing problem \cite{elitzur-1993}, we introduce a new query complexity model, which we call bomb query complexity $B(f)$. We investigate its relationship with the usual quantum query complexity $Q(f)$, and show that $B(f)=\Theta(Q(f)^2)$. 

This result gives a new method to upper bound the quantum query complexity: we give a method of finding bomb query algorithms from classical algorithms, which then provide nonconstructive upper bounds on $Q(f)=\Theta(\sqrt{B(f)})$. We subsequently were able to give explicit quantum algorithms matching our upper bound method.  We apply this method on the single-source shortest paths problem on unweighted graphs, obtaining an algorithm with $O(n^{1.5})$ quantum query complexity, improving the best known algorithm of $O(n^{1.5}\sqrt{\log n})$ \cite{furrow-2008}. Applying this method to the maximum bipartite matching problem gives an $O(n^{1.75})$ algorithm, improving the best known trivial $O(n^2)$ upper bound.
\end{abstract}

\thispagestyle{fancy}
\rhead{MIT-CTP/4592}
\renewcommand{\headrulewidth}{0pt}
\renewcommand{\footrulewidth}{0pt}

\section{Introduction}
Quantum query complexity is an important method of understanding the power of quantum computers. In this model we are given a black-box containing a boolean string $x=x_1\cdots x_N$, and we would like to calculate some function $f(x)$ with as few quantum accesses to the black-box as possible. It is often easier to give bounds on the query complexity than to the time complexity of a problem, and insights from the former often prove useful in understanding the power and limitations of quantum computers. One famous example is Grover's algorithm for unstructured search \cite{grover-1997}; by casting this problem into the query model it was shown that $\Theta(\sqrt{N})$ queries was required \cite{bbbv-1997}, proving that Grover's algorithm is optimal.

Several methods have been proposed to bound the quantum query complexity. Upper bounds are almost always proven by finding better query algorithms. Some general methods of constructing quantum algorithms have been proposed, such as quantum walks \cite{ambainis-2003,szegedy-2004,magniez-2006,jeffery-2012} and learning graphs \cite{belovs-2011}. For lower bounds, the main methods are the polynomial method \cite{beals-1998} and adversary method \cite{ambainis-2000}. In particular, the general adversary lower bound \cite{hoyer-2007} has been shown to tightly characterize quantum query complexity \cite{reichardt-2009,reichardt-2011,lee-2010}, but calculating such a tight bound seems difficult in general. Nevertheless, the general adversary lower bound is valuable as a theoretical tool, for example in proving composition theorems \cite{reichardt-2011,lee-2010,kimmel-2013} or showing nonconstructive (!) upper bounds \cite{kimmel-2013}.

\subsection*{Our work}
To improve our understanding of quantum query complexity, we introduce and study an alternative oracle model, which we call the \emph{bomb oracle} (see Section \ref{sect:bomb query} for the precise definition). Our model is inspired by the concept of \emph{interaction free measurements}, illustrated vividly by the Elitzur-Vaidman bomb testing problem \cite{elitzur-1993}, in which a property of a system can be measured without disturbing the system significantly. Like the quantum oracle model, in the bomb oracle model we want to evaluate a function $f(x)$ on a hidden boolean string $x=x_1\cdots x_N$ while querying the oracle as few times as possible. In this model, however, the bomb oracle is a controlled quantum oracle with the extra requirement that the algorithm fails if the controlled query returns a 1. This seemingly impossible task can be tackled using the quantum Zeno effect \cite{misra-1977}, in a fashion similar to the Elitzur-Vaidman bomb tester \cite{kwiat-1995} (Section \ref{sect:elitzur-vaidman}).

Our main result (Theorem \ref{thm:main result}) is that the bomb query complexity, $B(f)$, is characterized by the square of the quantum query complexity $Q(f)$: 

\begin{repthm}{thm:main result}
\begin{equation}
B(f) = \Theta(Q(f)^2).
\end{equation}
\end{repthm}

We prove the upper bound, $B(f) = O(Q(f)^2)$ (Theorem \ref{thm:upperbound}), by adapting Kwiat et al.'s solution of the Elitzur-Vaidman bomb testing problem (Section \ref{sect:elitzur-vaidman}, \cite{kwiat-1995}) to our model. We prove the lower bound, $B(f) = \Omega(Q(f)^2)$ (Theorem \ref{thm:lowerbound}), by demonstrating that $B(f)$ is lower bounded by the square of the general adversary bound \cite{hoyer-2007}, $(\text{Adv}^\pm(f))^2$. The aforementioned result that the general adversary bound tightly characterizes the quantum query complexity \cite{reichardt-2009,reichardt-2011,lee-2010}, $Q(f)= \Theta(\text{Adv}^\pm(f) )$, allows us to draw our conclusion.

This characterization of Theorem  \ref{thm:main result} allows us to give \emph{nonconstructive} upper bounds to the quantum query complexity for some problems. For some functions $f$ a bomb query algorithm is easily designed by adapting a classical algorithm: specifically, we show that (stated informally):

\begin{repthm}{thm:classical}[informal]
Suppose there is a classical algorithm that computes $f(x)$ in $T$ queries, and the algorithm guesses the result of each query (0 or 1), making no more than an expected $G$ mistakes for all $x$. Then we can design a bomb query algorithm that uses $O(TG)$ queries, and hence $B(f) = O(TG)$. By our characterization of Theorem \ref{thm:main result}, $Q(f)=O(\sqrt{TG})$.
\end{repthm}

This result inspired us to look for an explicit quantum algorithm that reproduces the query complexity $O(\sqrt{TG})$. We were able to do so:

\begin{repthm}{thm:classical to quantum}
Under the assumptions of Theorem \ref{thm:classical}, there is an explicit algorithm (Algorithm \ref{alg:classical}) for $f$ with query complexity $O(\sqrt{TG})$.
\end{repthm}

Using Algorithm \ref{alg:classical}, we were able to give an $O(n^{3/2})$ algorithm for the single-source shortest paths (SSSP) problem in an unweighted graph with $n$ vertices, beating the best-known $O(n^{3/2} \sqrt{\log n})$ algorithm \cite{furrow-2008}. A more striking application is our $O(n^{7/4})$ algorithm for maximum bipartite matching; in this case the best-known upper bound was the trivial $O(n^2)$, although the time complexity of this problem had been studied in \cite{ambainis-2006} (and similar problems in \cite{dorn-2008}).

Finally, in Section \ref{sect:projective query} we briefly discuss a related query complexity model, which we call the \emph{projective query complexity} $P(f)$,  in which each quantum query to $x$ is immediately followed by a classical measurement of the query result. This model seems interesting to us because its power lies between classical and quantum: we observe that $P(f) \le B(f) = \Theta(Q(f)^2)$ and $Q(f) \le P(f) \le R(f)$, where $R(f)$ is the classical randomized query complexity. We note that Regev and Schiff \cite{regev-2008} showed that $P(OR)=\Theta(N)$.

\subsection*{Past and related work}
Mitchison and Jozsa have proposed a different computational model called \emph{counterfactual computation} \cite{mitchison-2001}, also based on interaction-free measurement. In counterfactual computation the result of a computation may be learnt without ever running the computer. There has been some discussion on what constitutes counterfactual computation; see for example \cite{hosten-2006,mitchison-2006,hosten-2006-2,vaidman-2006,hosten-2006-3,salih-2013,vaidman-2013}.

There have also been other applications of interaction-free measurement to quantum cryptography. For example, Noh has proposed counterfactual quantum cryptography \cite{noh-2009}, where a secret key is distributed between parties, even though a particle carrying secret information is not actually transmitted. More recently, Brodutch et al. proposed an adaptive attack \cite{brodutch-2014} on Wiesner's quantum money scheme \cite{wiesner-1983}; this attack is directly based off Kwiat et al.'s solution of the Elitzur-Vaidman bomb testing problem \cite{kwiat-1995}.

Our Algorithm \ref{alg:classical} is very similar to Kothari's algorithm for the oracle identification problem \cite{kothari-2014}, and we refer to his analysis of the query complexity in our work. 

The projective query model we detail in Section \ref{sect:projective query} was, to our knowledge, first considered by Aaronson in unpublished work in 2002 \cite{aaronson-2014}.

\subsection*{Discussion and outlook}
Our work raises a number of open questions. The most obvious ones are those pertaining to the application of our recipe for turning classical algorithms into bomb algorithms, Theorem \ref{thm:classical}:

\begin{itemize}
\item Can we generalize our method to handle non-boolean input and output? If so, we might be able to find better upper bounds for the adjacency-list model, or to study graph problems with weighted edges.
\item Can our explicit (through Theorem \ref{thm:classical to quantum}) algorithm for maximum bipartite matching be made more \emph{time} efficient? The best known quantum algorithm for this task has time complexity $O(n^2\log n)$ in the adjacency matrix model \cite{ambainis-2006}.
\item Finally, can we find more upper bounds using Theorem \ref{thm:classical}? For example, could the query complexity of the maximum matching problem on general nonbipartite graphs be improved with Theorem \ref{thm:classical}, by analyzing the classical algorithm of Micali and Vazirani \cite{micali-1980}?
\end{itemize}

Perhaps more fundamental, however, is the possibility that the bomb query complexity model will help us understand the relationship between the classical randomized query complexity, $R(f)$, and the quantum query complexity $Q(f)$. It is known \cite{beals-1998} that for all total functions $f$, $R(f) = O(Q(f)^6)$; however, there is a long-standing conjecture that actually $R(f) = O(Q(f)^2)$. In light of our results, this conjecture is equivalent to the conjecture that $R(f) = O(B(f))$. Some more open questions, then, are the following:

\begin{itemize}
\item Can we say something about the relationship between $R(f)$ and $B(f)$ for specific classes of functions? For example, is $R(f) = O(B(f)^2)$ for total functions?
\item Referring to the notation of Theorem \ref{thm:classical}, we have $B(f) = O(TG)$. Is the quantity $G$ related to other measures used in the study of classical decision-tree complexity, for example the certificate complexity, sensitivity \cite{cook-1986}, block sensitivity \cite{nisan-1991}, or (exact or approximate) polynomial degree? (For a review, see \cite{buhrman-1999}.)
\item What about other query complexity models that might help us understand the relationship between $R(f)$ and $Q(f)$? One possibility is the projective query complexity, $P(f)$, considered in Section \ref{sect:projective query}. Regev and Schiff \cite{regev-2008} have shown (as a special case of their results) that even with such an oracle, $P(OR)=\Theta(N)$ queries are needed to evaluate the OR function.
\end{itemize}

We hope that further study on the relationship between bomb and classical randomized complexity will shed light on the power and limitations of quantum computation.
\section{Preliminaries}

\subsection{The Elitzur-Vaidman bomb testing problem} \label{sect:elitzur-vaidman}

The Elitzur-Vaidman bomb testing problem \cite{elitzur-1993} is a well-known thought experiment to demonstrate how quantum mechanics differs drastically from our classical perceptions. This problem demonstrates dramatically the possibility of \emph{interaction free measurements}, the possibility of a measurement on a property of a system without disturbing the system. 

The bomb-testing problem is as follows: assume we have a bomb that is either a dud or a live bomb. The only way to interact with the bomb is to probe it with a photon: if the bomb is a dud, then the photon passes through unimpeded; if the bomb is live, then the bomb explodes. We would like to determine whether the bomb is live or not without exploding it. If we pass the photon through a beamsplitter before probing the bomb, we can implement the \emph{controlled probe}, pictured below:

\begin{align} \label{circ:bombEV}
\Qcircuit @C=1em @R=.7em {
\lstick{\ket{c}}& \ctrl{1} & \rstick{\ket{c}} \qw \\
\lstick{\ket{0}}&   \gate{I\text{ or }X} &\meter & \rstick{\text{explodes if 1}}  
}
\end{align}

The controlled gate is $I$ if the bomb is a dud, and $X$ if it is live. It was shown in \cite{kwiat-1995} how to determine whether a bomb was live with arbitrarily low probability of explosion by making use of the quantum Zeno effect \cite{misra-1977}. Specifically, writing $R(\theta)=\exp(i\theta X)$ (the unitary operator rotating $\ket{0}$ to $\ket{1}$ in $\pi/(2\theta)$ steps), the following circuit determines whether the bomb is live with failure probability $O(\theta)$:

\begin{align}
\Qcircuit @C=1em @R=.7em {
\lstick{\ket{0}}   & \qw &\gate{R(\theta)}& \ctrl{1}  & \qw & \qw &    &&\gate{R(\theta)} & \ctrl{1}  & \qw & \qw \\
&  &\lstick{\ket{0}} &\gate{I\text{ or }X}  & \meter  & & \ustick{\dots} &    &\lstick{\ket{0}}&\gate{I\text{ or }X} & \meter &  \\ 
&&&&\dstick{{\pi/(2\theta)}\,\text{times in total}}  \gategroup{1}{3}{2}{11}{1.0em}{--} \\
}  \\
 \, \nonumber 
\end{align} 

If the bomb is a dud, then the controlled probes do nothing, and repeated application of $R(\theta)$ rotates the control bit from $\ket{0}$ to $\ket{1}$. If the bomb is live, the bomb explodes with $O(\theta^2)$ probability in each application of the probe, projecting the control bit back to $\ket{0}$. After $O(1/\theta)$ iterations the control bit stays in $\ket{0}$, with only a $O(\theta)$ probability of explosion. Using $O(1/\theta)$ operations, we can thus tell a dud bomb apart from a live one with only $O(\theta)$ probability of explosion.

\subsection{Quantum query complexity} \label{sect:quantum query}
Throughout this paper, all functions $f$ which we would like to calculate are assumed to have boolean input, i.e. the domain is $D \subseteq \{0,1\}^N$.

For a boolean strings $x \in \{0,1\}^N$, the quantum oracle $O_x$ is a unitary operator that acts on a one-qubit record register and an $N$-dimensional index register as follows ($\oplus$ is the XOR function):

\begin{equation}
O_x\ket{r,i} = \ket{r\oplus x_i,i}
\end{equation}
\begin{align}
\Qcircuit @C=1em @R=.7em {
\lstick{\ket{r}}&\multigate{1}{O_x}   &  \rstick{\ket{r\oplus x_i}}\qw\\
\lstick{\ket{i}}&   \ghost{O_x}  & \rstick{\ket{i}} \qw }    \nn
\end{align}

We want to determine the value of a boolean function $f(x)$ using as few queries to the quantum oracle $O_x$ as possible. Algorithms for $f$ have the general form as the following circuit, where the $U_t$'s are unitaries independent of $x$:  
$$
\Qcircuit @C=1em @R=.7em {
 &\multigate{5}{U_0}   & \multigate{1}{O_x}  &\multigate{5}{U_1} & \multigate{1}{O_x} &\multigate{5}{U_2}  & \multigate{1}{O_x} &\multigate{5}{U_3}&\qw                        \\
 &\ghost{U_0}   &\ghost{O_x}           &\ghost{U_1}   &\ghost{O_x}  &\ghost{U_0}   &\ghost{O_x}  &\ghost{U_0}&\qw    \\
&\ghost{U_0}&   \qw                               & \ghost{U_1} &\qw & \ghost{U_1} &\qw  & \ghost{U_1} &\qw \\
&\ghost{U_0}&   \qw                               & \ghost{U_1} &\qw & \ghost{U_1} &\qw  & \ghost{U_1} & \rstick{\dots}\qw \\
&\ghost{U_0}&   \qw                               & \ghost{U_1} &\qw & \ghost{U_1} &\qw  & \ghost{U_1} &\qw \\
&\ghost{U_0}&   \qw                               & \ghost{U_1} &\qw & \ghost{U_1} &\qw  & \ghost{U_1} &\qw 
}
$$
The quantum query complexity $Q_{\delta}(f)$ is the minumum number of applications of $O_x$'s in the circuit requried to determine $f(x)$ with error no more than $\delta$ for all $x$. By gap amplification (e.g. by performing the circuit multiple rounds and doing majority voting), it can be shown that the choice of $\delta$ only affects the query complexity by a $\log(1/\delta)$ factor. We therefore often set $\delta = 0.01$ and write $Q_{0.01}(f)$ as $Q(f)$.

\section{Bomb query complexity} \label{sect:bomb query}

In this section we introduce a new query complexity model, which we call the \emph{bomb query complexity}. A circuit in the bomb query model is a restricted quantum query circuit, with the following restrictions on the usage of the quantum oracle:

\begin{enumerate}
\item We have an extra control register $\ket{c}$ used to control whether $O_x$ is applied (we call the controlled version $CO_x$):
\begin{equation}
CO_x\ket{c,r,i} = \ket{c,r\oplus(c\cdot x_i),i}.
\end{equation}
where $\cdot$ indicates boolean AND.
\item The record register, $\ket{r}$ in the definition of $CO_x$ above, \emph{must} contain $\ket{0}$ before $CO_x$ is applied.
\item After $CO_x$ is applied, the record register is immediately measured in the computational basis (giving the answer $c\cdot x_i$), and the algorithm \emph{terminates immediately if a 1 is measured} (if $c\cdot x_i = 1$). We refer to this as \emph{the bomb blowing up} or \emph{the bomb exploding}.
\end{enumerate}

\begin{align} \label{circ:bomb}
\Qcircuit @C=1em @R=.7em {
\lstick{\ket{c}}& \ctrl{1} &\qw & \rstick{\ket{c}} \qw \\
\lstick{\ket{0}}&   \multigate{1}{O_x} &\meter & \measure{\text{bomb}}\cw & \rstick{\text{explodes if }c\cdot x_i =1}     \\
\lstick{\ket{i}}&\ghost{O_x}  &\qw & \rstick{\ket{i}} \qw
}
\end{align}

We define the \emph{bomb query complexity} $B_{\epsilon,\delta}(f)$ to be the minimum number of times the above circuit needs to be applied in an algorithm such that the following hold for all input $x$:
\begin{itemize}
\item The algorithm reaches the end without the bomb exploding with probability at least  $1-\epsilon$. We refer to the probability that the bomb explodes as the \emph{probability of explosion}.
\item The total probability that the bomb either explodes or fails to output $f(x)$ correctly is no more than $\delta \ge \epsilon$.
\end{itemize}
The above implies that the algorithm outputs the correct answer with probability at least $1-\delta$.

The effect of the above circuit is equivalent to applying the following projector on $\ket{c,i}$:
\begin{align}
M_x = CP_{x,0} &= \sum\limits_{i=1}^{N}\ket{0,i}\bra{0,i} + \sum\limits_{x_i=0}\ket{1,i} \bra{1,i} \\
&= I - \sum\limits_{x_i=1}\ket{1,i} \bra{1,i}.
\end{align}

$CP_{x,0}$ (which we will just call $M_x$ in our proofs later on) is the controlled version of $P_{x,0}$, the projector that projects onto the indices $i$ on which $x_i = 0$:
\begin{align}
P_{x,0} = \sum\limits_{x_i=0}\ket{i}\bra{i}.
\end{align}

Thus Circuit \ref{circ:bomb} is equivalent to the following circuit :
\begin{align}
&\Qcircuit @C=1em @R=.7em {
\lstick{\ket{c}}& \ctrl{1} &\qw & \rstick{\ket{c}} \qw \\
\lstick{\ket{i}}&\gate{P_{x,0}}  &\qw & \rstick{(1-c \cdot x_i)\ket{i}} \qw
}
\end{align}
In this notation, the square of the norm of a state is the probability that the state has survived to this stage, i.e. the algorithm has not terminated. The norm of $(1-c \cdot x_i)\ket{x_i}$ is 1 if $c \cdot x_i = 0$ (the state survives this stage), and 0 otherwise (the bomb blows up).

A general circuit in this model looks like the following:

$$
\Qcircuit @C=1em @R=.7em {
 &\multigate{5}{U_0}   & \ctrl{1}  &\multigate{5}{U_1} & \ctrl{1} &\multigate{5}{U_2}  & \ctrl{1} &\multigate{5}{U_3}&\qw                        \\
 &\ghost{U_0}   &\gate{P_{x,0}}           &\ghost{U_1}   &\gate{P_{x,0}}  &\ghost{U_0}   &\gate{P_{x,0}}  &\ghost{U_0}&\qw    \\
&\ghost{U_0}&   \qw                               & \ghost{U_1} &\qw & \ghost{U_1} &\qw  & \ghost{U_1} &\qw \\
&\ghost{U_0}&   \qw                               & \ghost{U_1} &\qw & \ghost{U_1} &\qw  & \ghost{U_1} & \rstick{\dots}\qw \\
&\ghost{U_0}&   \qw                               & \ghost{U_1} &\qw & \ghost{U_1} &\qw  & \ghost{U_1} &\qw \\
&\ghost{U_0}&   \qw                               & \ghost{U_1} &\qw & \ghost{U_1} &\qw  & \ghost{U_1} &\qw 
}
$$

It is not at all clear that gap amplification can be done efficiently in the bomb query model to improve the error $\delta$; after all, repeating the circuit multiple times increases the chance that the bomb blows up.  However, it turns out that the complexity $B_{\epsilon,\delta}(f)$ is closely related to $Q_{\delta}(f)$, and therefore the choice of $\delta$ affects $B_{\epsilon,\delta}(f)$ by at most a $\log^2(1/\delta)$ factor as long as $\delta \ge \epsilon$ (see Lemma \ref{lem:delta}). We therefore often omit $\delta$ by setting $\delta = 0.01$, and write $B_{\epsilon,0.01}(f)$ as $B_{\epsilon}(f)$. Sometimes we even omit the $\epsilon$.

Finally, note that the definition of the bomb query complexity $B(f)$ is inherently \emph{asymmetric} with respect to $0$ and $1$ in the input: querying $1$ causes the bomb to blow up, while querying $0$ is safe. In Section \ref{sect:generalize bomb}, we define a \emph{symmetric}  bomb query model and its corresponding query complexity, $\tilde{B}_{\epsilon,\delta}(f)$.  We prove that this generalized symmetric model is asymptotically equivalent to the original asymmetric model, $\tilde B_{\epsilon,\delta}(f) = \Theta({B}_{\epsilon,\delta}(f))$, in Lemma \ref{lem:equivalent}. This symmetric version of the bomb query complexity will turn out to be useful in designing bomb query algorithms.

\section{Main result}
Our main result is the following:
\begin{thm} \label{thm:main result} For all functions $f$ with boolean input alphabet, and numbers $\epsilon$ satisfying $0 < \epsilon \le 0.01$,

\begin{equation}
B_{\epsilon,0.01}(f) = \Theta\L(\frac{Q_{0.01}(f)^2}{\epsilon}\R).
\end{equation}
Here 0.01 can be replaced by any constant no more than $1/10$.
\end{thm}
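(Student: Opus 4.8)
The plan is to prove the two inequalities separately. For the upper bound I would simulate a near-optimal quantum query algorithm for $f$ by a bomb algorithm, following Kwiat et al.'s solution of the Elitzur--Vaidman problem (Section~\ref{sect:elitzur-vaidman}); for the lower bound I would run the general adversary argument directly on bomb circuits and then invoke the tight characterization $Q_{0.01}(f)=\Theta(\mathrm{Adv}^{\pm}(f))$ of \cite{reichardt-2009,reichardt-2011,lee-2010}. Together with the fact that changing the error constant $0.01$ (to any constant below $1/10$) perturbs both sides by at most bounded, at worst logarithmic, factors --- which is essentially Lemma~\ref{lem:delta} together with gap amplification for $Q$ --- these give the theorem. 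The two directions are exactly Theorems~\ref{thm:upperbound} and~\ref{thm:lowerbound}.

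\textbf{Upper bound.} Fix a quantum algorithm for $f$ making $q=O(Q_{0.01}(f))$ queries with error at most $1/1000$ (affordable by gap amplification). Replace each oracle call $O_x\ket{r,i}=\ket{r\oplus x_i,i}$ by the following gadget acting on a fresh ancilla $\ket a$ initialized to $\ket 0$ together with the index register: perform $L=\Theta(1/\theta)$ rounds, each consisting of $R(\theta)=\exp(i\theta X)$ on $\ket a$ followed by the bomb query $CO_x$ controlled by $\ket a$ (with a fresh record register, which is indeed $\ket 0$ going in, because a surviving measurement returns $0$). On a branch with $x_i=0$ the bomb queries act trivially and $\ket a$ rotates from $\ket 0$ to $\ket 1$ up to a correctable phase; on a branch with $x_i=1$ each bomb query projects $\ket a$ back to $\ket 0$, so $\ket a$ stays in $\ket 0$ and the probability this branch ever explodes is $O(L\theta^2)=O(\theta)$. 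Conditioned on survival the gadget therefore sends $\ket 0_a\ket i\mapsto\ket{\overline{x_i}}_a\ket i$; flipping $\ket a$, XOR-ing it into the record register, and running the gadget in reverse then resets the ancilla, at the cost of another $O(1/\theta)$ bomb queries and $O(\theta)$ explosion probability, so we obtain a faithful replacement for $O_x$ apart from an $O(\theta)$ distortion of the amplitudes on $x_i=1$ branches. Over all $q$ queries the explosion probability is $O(q\theta)$ and the output deviates from the ideal one by $O(q\theta)$; choosing $\theta=\Theta(\epsilon/q)$ with a small enough hidden constant makes the explosion probability at most $\epsilon$ and keeps the total error below $0.01$, for a bomb query count of $q\cdot O(1/\theta)=O(q^2/\epsilon)=O(Q_{0.01}(f)^2/\epsilon)$.

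\textbf{Lower bound.} Let $\Gamma$ be an optimal general adversary matrix for $f$, normalized so that $\max_i\|\Gamma\circ D_i\|=1$ and $\|\Gamma\|=\mathrm{Adv}^{\pm}(f)$, where $D_i$ is the $0/1$ matrix with $(D_i)_{xy}=1$ iff $x_i\neq y_i$, and let $\alpha$ be a principal eigenvector of $\Gamma$. For a bomb algorithm using $B=B_{\epsilon,0.01}(f)$ queries let $\ket{\psi_t^x}$ be the (sub-normalized) state on input $x$ just before the $t$-th bomb query, and set $W_t=\sum_{x,y}\Gamma_{xy}\,\overline{\alpha_x}\alpha_y\,\langle\psi_t^x|\psi_t^y\rangle$. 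The $x$-independent unitaries leave $W_t$ unchanged, while a bomb query replaces $\langle\psi_t^x|\psi_t^y\rangle$ by $\langle\psi_t^x|M_xM_y|\psi_t^y\rangle$, so the change is governed by $M_xM_y-I=-\Pi_x(I-\Pi_y)-(I-\Pi_x)\Pi_y-\Pi_x\Pi_y$, where $\Pi_x=\sum_{x_i=1}\ket{1,i}\bra{1,i}$. The first two terms are supported on indices $i$ with $x_i\neq y_i$, and since $\|\Pi_x\ket{\psi_t^x}\|^2$ equals the step-$t$ explosion probability for input $x$, which sums over $t$ to at most $\epsilon$, a Cauchy--Schwarz bound over the index register makes their contribution at most $2\sqrt{s_t}$ with $s_t:=\sum_x|\alpha_x|^2\|\Pi_x\ket{\psi_t^x}\|^2$ and $\sum_t s_t\le\epsilon$; the last term is quadratically small, contributing at most $\|\Gamma\|s_t$. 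Hence, by Cauchy--Schwarz over the $B$ queries, $|W_0-W_B|\le 2\sum_t\sqrt{s_t}+\|\Gamma\|\sum_t s_t\le 2\sqrt{B\epsilon}+\mathrm{Adv}^{\pm}(f)\,\epsilon$. Finally $W_0=\|\Gamma\|=\mathrm{Adv}^{\pm}(f)$, and the standard adversary endgame (the output measurement must identify $f$-values correctly with probability at least $1-0.01$ while the bomb survives, forcing $|\langle\psi_B^x|\psi_B^y\rangle|$ to be small whenever $f(x)\neq f(y)$) gives $|W_B|\le c\,\|\Gamma\|$ for a constant $c<1$; since $\epsilon\le 0.01$, rearranging yields $\mathrm{Adv}^{\pm}(f)=O(\sqrt{B\epsilon})$, i.e.\ $B_{\epsilon,0.01}(f)=\Omega(\mathrm{Adv}^{\pm}(f)^2/\epsilon)=\Omega(Q_{0.01}(f)^2/\epsilon)$.

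\textbf{Main obstacle.} The delicate point is the lower bound. Because each bomb query is a contraction rather than a unitary, the usual per-query estimate (a constant times $\max_i\|\Gamma\circ D_i\|$) only yields $B=\Omega(Q^2)$, with no $1/\epsilon$; recovering the sharp $1/\epsilon$ requires (i) decomposing $M_xM_y-I$ so that all of the progress change is carried by the explosion projectors $\Pi_x$, (ii) bounding $\|\Pi_x\ket{\psi_t^x}\|$ by the remaining explosion budget, and (iii) combining these across the $B$ queries by Cauchy--Schwarz rather than a worst-case sum --- and then checking that the sub-normalization of the states and the eigenvector weights $\overline{\alpha_x}\alpha_y$ do not spoil the constants, nor the replacement of the error constant. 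On the upper-bound side the only nontrivial bookkeeping is tracking how the $O(\theta)$ per-query amplitude distortions, together with the ``conditioned on survival'' maps, accumulate over the $q$ queries.
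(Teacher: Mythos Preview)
Your proposal is correct and follows essentially the same route as the paper. The upper bound is exactly the Zeno-effect simulation of each $O_x$ call that the paper carries out in Theorem~\ref{thm:upperbound} (your flip-then-XOR variant is equivalent to the paper's CNOT-then-$X$), and your lower bound is the paper's adversary argument in Appendix~\ref{app:lower bound proof}: your decomposition $M_xM_y-I=-\Pi_x(I-\Pi_y)-(I-\Pi_x)\Pi_y-\Pi_x\Pi_y$ is algebraically identical to the paper's $I-M_xM_y=(I-M_x)M_y+M_x(I-M_y)+(I-M_x)(I-M_y)$ since $\Pi_x=I-M_x$, and the per-step bound $2\sqrt{s_t}\max_i\|\Gamma_i\|+s_t\|\Gamma\|$ followed by Cauchy--Schwarz over the $B$ steps is precisely what the paper does (the paper makes the ``Cauchy--Schwarz over the index register'' step explicit via the Frobenius-norm inequality $|\tr(XYZ)|\le\|X\|_F\|Y\|\|Z\|_F$ and the auxiliary matrices $Q_i,\tilde Q_i$, which you should cite or spell out when you write this up).
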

\begin{proof}
The upper bound $B_{\epsilon,\delta}(f) = O(Q_\delta(f)^2/\epsilon)$ is proved in Theorem \ref{thm:upperbound}. The lower bound $B_{\epsilon,\delta}(f) = \Omega(Q_{0.01}(f)^2/\epsilon)$ is proved in Theorem \ref{thm:lowerbound}.
\end{proof}
\begin{lem} \label{lem:delta}  For all functions $f$ with boolean input alphabet, and numbers $\epsilon$, $\delta$ satisfying $0 < \epsilon \le \delta \le 1/10$, 
\begin{align}
B_{\epsilon,0.1}(f) = O(B_{\epsilon,\delta}(f)),\quad B_{\epsilon,\delta}(f)= O(B_{\epsilon,0.1}(f)\log^2(1/\delta)).
\end{align}
In particular, if $\delta$ is constant,
\begin{align}
B_{\epsilon,\delta}(f)=\Theta (B_{\epsilon,0.1}(f)).
\end{align}
\end{lem}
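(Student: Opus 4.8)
The plan is to route the whole statement through ordinary quantum query complexity, using the two halves of the main result --- Theorem~\ref{thm:upperbound} ($B_{\epsilon,\delta}(f) = O(Q_\delta(f)^2/\epsilon)$) and Theorem~\ref{thm:lowerbound} ($B_{\epsilon,\delta}(f) = \Omega(Q_{0.01}(f)^2/\epsilon)$ for $\delta \le 1/10$) --- as black boxes. All of the genuinely bomb-specific work has already been packaged into those two theorems, so the lemma ought to reduce to the standard fact about how $Q_\delta(f)$ depends on $\delta$.

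The first inequality, $B_{\epsilon,0.1}(f) = O(B_{\epsilon,\delta}(f))$, requires no work: since $\delta \le 1/10 = 0.1$, any bomb algorithm meeting the $(\epsilon,\delta)$ guarantee automatically meets the $(\epsilon,0.1)$ guarantee, so in fact $B_{\epsilon,0.1}(f) \le B_{\epsilon,\delta}(f)$ with no query overhead.

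For the reverse inequality I would chain three facts. First, $B_{\epsilon,\delta}(f) = O(Q_\delta(f)^2/\epsilon)$ by Theorem~\ref{thm:upperbound}. Second, ordinary quantum gap amplification --- run $O(\log(1/\delta))$ independent copies of an optimal bounded-error quantum algorithm for $f$ and take a majority vote --- gives $Q_\delta(f) = O(Q_{0.01}(f)\log(1/\delta))$, hence $Q_\delta(f)^2 = O(Q_{0.01}(f)^2\log^2(1/\delta))$. Third, Theorem~\ref{thm:lowerbound} instantiated at error parameter $0.1 \le 1/10$ gives $Q_{0.01}(f)^2/\epsilon = O(B_{\epsilon,0.1}(f))$. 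Composing these, $B_{\epsilon,\delta}(f) = O(Q_{0.01}(f)^2 \log^2(1/\delta)/\epsilon) = O(B_{\epsilon,0.1}(f)\log^2(1/\delta))$, as claimed. The ``in particular'' statement is then immediate: for constant $\delta$, $\log^2(1/\delta) = O(1)$, so the second inequality reads $B_{\epsilon,\delta}(f) = O(B_{\epsilon,0.1}(f))$, and combining with the first inequality yields $B_{\epsilon,\delta}(f) = \Theta(B_{\epsilon,0.1}(f))$.

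I expect the difficulties to be minor and mostly bookkeeping. I would first check that the proofs of Theorems~\ref{thm:upperbound} and~\ref{thm:lowerbound} do not themselves invoke Lemma~\ref{lem:delta}, so there is no circularity; this should be fine, since the upper bound is obtained by directly simulating a quantum algorithm of prescribed error via the Kwiat et al. construction and the lower bound is an adversary argument, neither involving amplification inside the bomb model. I would also watch the error budget in Theorem~\ref{thm:upperbound}: the bomb algorithm's total failure probability is the explosion probability (at most $\epsilon$) plus the error of the simulated quantum algorithm, so I may have to feed in a quantum algorithm of error $\delta/2$ rather than $\delta$ --- harmless, since $\log^2(2/\delta) = O(\log^2(1/\delta))$ for $\delta \le 1/10$. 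The one conceptual point worth stating is why a ``direct'' proof staying inside the bomb model is awkward: this is exactly the difficulty flagged before the lemma --- naively repeating a bomb algorithm multiplies the probability of explosion by the number of repetitions and so blows the $\epsilon$-budget --- and circumventing it seems to require re-deriving the substance of Theorem~\ref{thm:upperbound}, which is why passing through $Q(f)$ is the natural route.
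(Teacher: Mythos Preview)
Your proposal is correct and follows essentially the same route as the paper: pass through $Q_\delta(f)$, use standard gap amplification there, and convert back. The paper's one-line proof only cites Theorem~\ref{thm:upperbound} together with $Q_{0.1}(f)=O(Q_\delta(f))$ and $Q_\delta(f)=O(Q_{0.1}(f)\log(1/\delta))$; you are right that the lower-bound direction (Theorem~\ref{thm:lowerbound}) is implicitly needed as well to close the loop, and your observation that the first inequality is immediate by monotonicity in $\delta$ is a cleaner way to handle that half than routing it through $Q$.
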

\begin{proof}
This follows from Theorem \ref{thm:upperbound} and the fact that $Q_{0.1}(f)= O(Q_{\delta}(f))$ and $Q_{\delta}(f)= O(Q_{0.1}(f)\log(1/\delta))$.
\end{proof}
Because of this result, we will often omit the $0.01$ in $B_{\epsilon,0.01}$ and write simply $B_{\epsilon}$.

\subsection{Upper bound}
\begin{thm} \label{thm:upperbound}
For all functions $f$ with boolean input alphabet, and numbers $\epsilon$, $\delta$ satisfying $0 < \epsilon \le \delta \le 1/10$, 
\begin{equation}
B_{\epsilon,\delta}(f) = O(Q_{\delta}(f)^2 / \epsilon).
\end{equation}
\end{thm}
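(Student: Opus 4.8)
The strategy is to take a quantum query algorithm $\cA$ for $f$ and replace each of its calls to $O_x$ by a block of bomb queries that simulates it through the quantum Zeno effect, exactly as in Kwiat et al.'s solution of the Elitzur--Vaidman problem (Section~\ref{sect:elitzur-vaidman}). Write $Q := Q_\delta(f)$. The simulation is approximate, so I cannot afford to start from an algorithm whose error is already $\delta$; instead, by standard amplification (majority vote of $O(1)$ independent runs, which is affordable since $Q_{\delta/10}(f) = O(Q_\delta(f))$ whenever $\delta \le 1/10$) I first produce an algorithm $\cA'$ for $f$ using $O(Q)$ oracle calls with error at most $\delta/10$, and then simulate $\cA'$.

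Fix an angle $\theta > 0$ with $K := \pi/(2\theta)$ an integer. To simulate one call to $O_x$ acting on an index register $\ket i$ and a record register $\ket r$, I introduce a fresh ancilla qubit $\ket 0_c$ and (i) alternate the rotation $R(\theta) = \exp(i\theta X)$ on $\ket c$ with a bomb query $CP_{x,0}$ (controlled by $c$, acting on $i$), $K$ times in all; (ii) copy $c$ into $r$ with a \textsc{cnot} and then flip $r$; (iii) run the loop of (i) backwards with $R(-\theta)$ to disentangle and reset the ancilla. On the branch where the bomb never explodes, the loop in (i) maps $\ket i\ket 0_c$ to $\ket i\ket 1_c$ (up to the fixed phase $i$) when $x_i = 0$, losing no amplitude, and to $(\cos\theta)^K\ket i\ket 0_c$ when $x_i = 1$; acting on a superposition $\sum_i\alpha_i\ket i$ the registers $r$ and $i$ are spectators, so the process stays coherent, it writes $\ket{\neg x_i}$ into $c$ up to amplitude distortion $|1-(\cos\theta)^K| = O(\theta)$, and the probability of explosion is $1-(\cos\theta)^{2K} = O(\theta)$. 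Composing (i)--(iii): the ancilla returns to $\ket 0$, the record register ends in $\ket{r\oplus x_i}$, the phases from the forward and reverse loops cancel, and the block realizes $O_x$ up to operator-norm distortion $O(\theta)$, using $2K = O(1/\theta)$ bomb queries and incurring explosion probability $O(\theta)$.

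Now replace each of the $O(Q)$ oracle calls of $\cA'$ by such a block. Tracking the (sub-normalized) survival state one block at a time, the per-block distortions add, so at the end it lies within $O(Q\theta)$ in norm of the ideal output state of $\cA'$, and the total explosion probability is $O(Q\theta)$. Choosing $\theta = c\,\epsilon/Q$ for a small enough absolute constant $c$ makes the explosion probability at most $\min(\epsilon,\delta/10)$ and the accumulated distortion at most $\delta/10$; then the probability of either exploding or returning the wrong value is at most $\delta/10 + \delta/10 + \delta/10 \le \delta$, while the probability of explosion is at most $\epsilon$, so both requirements in the definition of $B_{\epsilon,\delta}(f)$ are met. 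The bomb-query count is $O(Q)\cdot O(1/\theta) = O(Q^2/\epsilon) = O(Q_\delta(f)^2/\epsilon)$.

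The conceptual core is that a bomb query made with a small-amplitude control behaves like a gentle measurement that barely perturbs the $x_i = 1$ part of the state while leaving the $x_i = 0$ part free to be rotated, so that iterating it in Zeno fashion reads $x_i$ coherently at a cost of $O(\theta)$ explosion probability per $O(1/\theta)$ queries. I expect the main obstacle to be the error bookkeeping rather than the mechanism: one must verify that the per-query distortion and per-query explosion probability accumulate only additively, that the phases picked up in step (i) are exactly undone by step (iii) so the block inverts cleanly on the survival branch, and that both the accumulated distortion and the accumulated explosion probability fit simultaneously inside the budget --- which is precisely what forces the reduction to $\cA'$ with error a constant fraction below $\delta$ at the outset.
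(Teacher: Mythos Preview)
Your proposal is correct and follows essentially the same approach as the paper: both simulate each call to $O_x$ by a Zeno block of $O(Q/\epsilon)$ bomb queries (rotate--project $K$ times, copy the ancilla into the record bit, flip, then uncompute), and both track the per-block explosion probability and amplitude distortion additively across the $O(Q)$ blocks. The only cosmetic difference is in the error budgeting: the paper starts from an algorithm with error $\delta-\epsilon$ and then uses the monotonicity $B_{\epsilon,\delta}\le B_{\epsilon/2,\delta}$ together with $\delta-\epsilon/2\ge\delta/2$ to pass from $Q_{\delta-\epsilon}$ to $Q_\delta$, whereas you first amplify to error $\delta/10$ at constant multiplicative cost and absorb the simulation error there.
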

The proof follows the solution of Elitzur-Vaidman bomb-testing problem (\cite{kwiat-1995}, or Section \ref{sect:elitzur-vaidman}). By taking advantage of the Quantum Zeno effect \cite{misra-1977}, using $O({Q(f) \over \epsilon})$ calls to $M_x$, we can simulate one call to $O_x$ with probabilty of explosion $O({\epsilon \over Q(f)})$. Replacing all $O_x$ queries with this construction results in a bounded error algorithm with probability of explosion $O({\epsilon \over Q(f) } Q(f)) =O(\epsilon)$.
\begin{proof} Let $\theta = \pi/(2L)$ for some large positive integer $L$ (chosen later), and let $R(\theta)$ be the rotation 
\begin{align}
\bpm
\cos\theta &-\sin \theta \\
\sin\theta &\cos\theta
\epm
\end{align}
We claim that with $2L$ calls to the bomb oracle $M_x = CP_{x,0}$, we can simulate $O_x$ by the following circuit with probability of explosion less than $\pi^2/(2L)$ and error $O(1/L)$. 
\begin{align}
\Qcircuit @C=1em @R=.7em {
\lstick{\ket{r}} &\qw &\qw&\qw&\qw&\qw&\qw&\qw &\targ &\gate{X} &\qw &\qw&\qw&\qw&\qw&\qw&\qw&\rstick{ \ket{r\oplus x_i}}\qw \\
\lstick{\ket{0}}   &\gate{R(\theta)}& \ctrl{1}  &\qw&    &&\gate{R(\theta)} & \ctrl{1} 
&  \ctrl{-1}  &\qw &\gate{R(-\theta)}& \ctrl{1} &\qw&    &&\gate{R(-\theta)} & \ctrl{1} &\rstick{\ket{0}\text{ (discard)}}\qw   \\
\lstick{\ket{i}}  &\qw &\gate{P_{x,0}}  &\qw    & \ustick{\dots} &    &\qw&\gate{P_{x,0}}  & \qw &\qw
& \qw&\gate{P_{x,0}}   &\qw &    \ustick{\dots} &   &\qw&\gate{P_{x,0}} &\rstick{\ket{i}}\qw\\ 
&&&&\dstick{{L}\,\text{times in total}} &&&&&&&&&&\dstick{\,{L}\,\text{times in total}}  \gategroup{2}{2}{3}{8}{.7em}{--}  \gategroup{2}{11}{3}{17}{.7em}{--} \\
} \nn \\
 \, \label{eq:B2O}
\end{align} 
In words, we simulate $O_x$ acting on $\ket{r,i}$ by the following steps:
\begin{enumerate}
\item Append an ancilla qubit $\ket{0}$, changing the state into $\ket{r,0,i}$.
\item Repeat the following $L$ times: \label{step:zeeno}
\begin{enumerate}
\item apply $R(\theta)$ on the second register
\item apply $M_x$ on the third register controlled by the second register.
\end{enumerate}
At this point, if the bomb hasn't blown up, the second register should contain $1-x_i$.
\item Apply $CNOT$ on the first register controlled by the second register; this copies $1-x_i$ to the first register.
\item Apply a $NOT$ gate to the first register. 
\item Repeat the following $L$ times to uncompute the second (ancilla) register \label{step:uncomp}:
\begin{enumerate}
\item apply $R(-\theta)$ on the second register
\item apply $M_x$ on the third register controlled by second register
\end{enumerate}
\item Discard the second (ancilla) register.
\end{enumerate}	

We now calculate explicitly the action of the circuit on an arbitrary state to confirm our claims above. Consider how the circuit acts on the basis state $\ket{r,0,i}$ (the second register being the appended ancilla). We break into cases:

\begin{itemize}
\item If $x_i=0$, then $P_{x,0}\ket{i} = \ket{i}$, so the controlled projections do nothing. Thus in Step 2 the rotation $R(\theta)^L = R(\pi/2)$ is applied to the ancilla qubit, rotating it from $0$ to $1$. After Step 2 then, the state is $\ket{r,1,i}$. Step 3 and 4 together do not change the state, while Step 5 rotates the ancilla back to $0$, resulting in the final state $\ket{r,0,i}$.
\item If $x_i=1$, then $P_{x,0}\ket{i} = 0$, and
\begin{equation}
M_x\ket{0,i} = \ket{0,i}, \quad M_x\ket{1,i} = 0 \quad \text{(for } x_i = 1 \text{)}
\end{equation}
Therefore in Step 2 and Step 5, after each rotation $R(\pm\theta)$, the projection $CP_{x,0}$ projects the ancilla back to 0:
\begin{equation}
M_xR(\theta)\ket{0,i} = M_x(\cos\theta\ket{0} + \sin\theta\ket{1})\ket{i} = \cos\theta\ket{0,i} \quad \text{(for } x_i = 1 \text{)}
\end{equation}
Each application of $M_xR(\theta)$ thus has no change on the state other than to shrink its amplitude by $\cos\theta$. The CNOT in Step 3 has no effect (since the ancilla stays in 0), and Step 4 maps $\ket{r}$ to $\ket{r\oplus 1}$. Since there are $2L$ applications of this shrinkage (in Step 2 and 5), the final state is $\cos^{2L}\theta \ket{r\oplus 1,0,i}$.
\end{itemize}

We can now combine the two cases: by linearity, the application of the circuit on a general state $\sum_{r,i}a_{r,i}\ket{r,i}$ (removing the ancilla) is
\begin{align}
\sum\limits_{r,i}a_{r,i}\ket{r,i} &\rightarrow \sum\limits_{r \in \{0,1\}, x_i = 0}a_{r,i}\ket{r,i} + \sum\limits_{r \in \{0,1\}, x_i = 1}a_{r,i}\cos^{2L}(\theta)\ket{r\oplus 1,i} \\
&= \sum\limits_{r,i}a_{r,i}\cos^{2Lx_i}\left(\frac{\pi}{2L}\right)\ket{r \oplus x_i,i} \equiv \ket{\psi'}
\end{align}

Thus the effect of this construction simulates the usual quantum oracle $\ket{r,i} \rightarrow \ket{r \oplus x_i,i}$ with blowing up probability no more than
\begin{equation}
1 - \cos^{4L}\left(\frac{\pi}{2L}\right) \le 1 - \left(1 - \frac{\pi^2}{4L^2}\right)^{2L} \le \frac{\pi^2}{2L}.
\end{equation}
Moreover, the difference between the output of our circuit, $\ket{\psi'}$, and the output on the quantum oracle, $\ket{\psi}=\sum_{r,i}a_{r,i}\ket{r\oplus x_i,i}$, is
\begin{align}
\L\| \ket{\psi'} - \ket{\psi} \R\| &= \L\| \sum\limits_{r \in \{0,1\}, x_i = 1}a_{r,i}(1-\cos^{2L}(\theta))\ket{r\oplus 1,i} \R\| \\
&\le 1 - \cos^{2L}\frac{\pi}{2L} \le \frac{\pi^2}{4L}.
\end{align}

Given this construction, we can now prove our theorem. Suppose we are given a quantum algorithm that finds $f(x)$ with $Q_{\delta'}(f)$ queries, making at most $\delta' = \delta - \epsilon$ error. We construct an algorithm using bomb oracles instead by replacing each of the applications of the quantum oracle $O_x$ by our circuit construction (\ref{eq:B2O}), where we choose 
\begin{equation}
L = \L\lceil \frac{\pi^2}{2\epsilon} Q_{\delta'}(f)\R\rceil
\end{equation}
Then the blowing up probability is no more than 
\begin{equation}
\frac{\pi^2}{2L} Q_{\delta'}(f) \le \epsilon
\end{equation}
and the difference between the final states, $\ket{\psi_f}$ and $\ket{\psi'_f}$, is at most
\begin{equation}
\L\| \ket{\psi'_{f}} - \ket{\psi_{f}} \R\| \le \frac{\pi^2}{4L} Q_{\delta'}(f) \le \frac{\epsilon}{2}.
\end{equation}

Therefore
\begin{align}
\L|\bra{\psi'_{f}}P\ket{\psi'_{f}} - \bra{\psi_{f}}P\ket{\psi_{f}}\R| &\le \L|\bra{\psi'_{f}}P\ket{\psi'_{f}} - \bra{\psi_{f}}P\ket{\psi'_{f}}\R| + \L|\bra{\psi'_{f}}P\ket{\psi_{f}} - \bra{\psi_{f}}P\ket{\psi_{f}}\R| \\
&\le \L\| \ket{\psi'_f} \R\| \L\| P\L(\ket{\psi'_f}-\ket{\psi_f}\R)\R\| + \L\| P\L(\ket{\psi'_f}-\ket{\psi_f}\R)\R\| \L\| \ket{\psi_f} \R\| \\
&\le \epsilon/2 + \epsilon/2 = \epsilon
\end{align}
for any projector $P$ (in particular, the projector that projects onto the classical answer at the end of the algorithm). The algorithm accumulates at most $\epsilon$ extra error at the end, giving a total error of no more than $\delta' + \epsilon = \delta$. This algorithm makes $2LQ_{\delta'}(f) < \frac{\pi^2}{\epsilon} Q_{\delta'}^2(f) + 2Q_{\delta'}(f)$ queries to the bomb oracle, and therefore
\begin{align}
B_{\epsilon,\delta} (f) &< \frac{\pi^2}{\epsilon}Q_{\delta-\epsilon}(f)^2 + 2Q_{\delta-\epsilon}(f)\\
&= O\L(\frac{Q_{\delta-\epsilon}(f)^2}{\epsilon}\R). \label{eq:temp upper bound}
\end{align}
From this we can derive that $B_{\epsilon,\delta} (f) = O(Q_{\delta}(f)^2/\epsilon)$:
\begin{align}
B_{\epsilon,\delta} (f) &< B_{\epsilon/2,\delta} (f) \\
&= O\L(\frac{Q_{\delta-\epsilon/2}(f)^2}{\epsilon}\R),\quad \text{by } \ref{eq:temp upper bound} \\
&= O\L(\frac{Q_{\delta}(f)^2}{\epsilon}\R),\quad \text{since } \frac{\delta}{2} \ge \delta - \frac{\epsilon}{2}.
\end{align}

\end{proof}

\subsection{Lower bound} \label{sect:lower bound}
\begin{thm} \label{thm:lowerbound}
For all functions $f$ with boolean input alphabet, and numbers $\epsilon$, $\delta$ satisfying $0 < \epsilon \le \delta \le 1/10$, 
\begin{equation}
B_{\epsilon,\delta}(f) = \Omega(Q_{0.01}(f)^2/\epsilon).
\end{equation}
\end{thm}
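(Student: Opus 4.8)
The plan is to show that $B_{\epsilon,\delta}(f) = \Omega(\mathrm{Adv}^{\pm}(f)^2/\epsilon)$, and then invoke the characterization $Q_{0.01}(f) = \Theta(\mathrm{Adv}^{\pm}(f))$ of \cite{reichardt-2009,reichardt-2011,lee-2010} to conclude. The central object is the general adversary bound $\mathrm{Adv}^{\pm}(f)$, which is the optimal value of a semidefinite program whose feasible solutions are adversary matrices $\Gamma$ (Hermitian, supported on pairs $x,y$ with $f(x)\neq f(y)$), with the objective $\|\Gamma\|$ and the constraints $\|\Gamma \circ D_i\| \le 1$ for each coordinate $i$, where $D_i$ is the indicator matrix of disagreement at position $i$. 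I would use the dual (or primal, depending on sign conventions) formulation that gives a \emph{lower bound} on query complexity: if an algorithm succeeds, then some progress measure $W_t = \sum_{x,y}\Gamma_{xy}\,\overline{a_x(t)}\,a_y(t)\,\langle\psi_x(t)|\psi_y(t)\rangle$ (or the appropriate bilinear form on the pair of runs of the algorithm on inputs $x$ and $y$) starts at $\|\Gamma\|$ times a constant and must end below some small constant, so the total change is $\Omega(\|\Gamma\|)$; the goal is to bound how much a single bomb query can change this progress measure.

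The key step is the per-query analysis. In the bomb model, each oracle call applies the projector $M_x = CP_{x,0}$ to the joint control-index register (and then renormalizes conditioned on survival). The difference from the standard adversary argument is twofold: the oracle is a projector rather than a unitary, and it is applied \emph{controlled}, so on the branch where the control is $\ket{0}$ nothing happens. I would argue that the change in the progress measure caused by one bomb query on inputs $x$ and $y$ is controlled by the overlap of the two states with the "dangerous" subspace $\mathrm{span}\{\ket{1,i}: x_i \neq y_i\}$ — intuitively, the only way a bomb query distinguishes $x$ from $y$ is through indices where they disagree, and touching such an index with control $\ket{1}$ risks explosion. Writing $p_x, p_y$ for the amplitudes the two runs place on exploding index-positions, the change in progress per query should be bounded by something like $O(\sqrt{p_x}+\sqrt{p_y})$ times a term involving $\|\Gamma \circ D_i\| \le 1$; summing the explosion probabilities over all $T = B_{\epsilon,\delta}(f)$ queries gives at most $O(\epsilon)$ by the definition of the bomb model (probability of explosion $\le \epsilon$), so by Cauchy–Schwarz the total progress is $O(\sqrt{T}\cdot\sqrt{\epsilon} + \dots)$. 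Setting this $\Omega(\|\Gamma\|) = \Omega(\mathrm{Adv}^{\pm}(f))$ and optimizing over $\Gamma$ yields $T = \Omega(\mathrm{Adv}^{\pm}(f)^2/\epsilon)$.

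The main obstacle, I expect, is making the per-query bound rigorous in the presence of the renormalization/postselection: after each query the surviving state is rescaled by $1/\|M_x|\text{state}\rangle\|$, and these rescaling factors differ for the $x$-run and the $y$-run, which could in principle inflate the progress measure. I would handle this by working with the \emph{unnormalized} (subnormalized) states throughout, so that the bomb query is literally the linear map $M_x$ with no rescaling, and the total loss of norm over the whole algorithm is exactly the explosion probability, hence $\le \epsilon$; the progress measure then needs to be defined and its endpoint behavior re-derived for subnormalized states, using that the final success probability is still $\ge 1-\delta$ with $\delta \le 1/10$ so the $f(x)\neq f(y)$ final states are nearly distinguishable. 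A secondary technical point is bounding $\|M_x - M_y\|$-type quantities restricted to the relevant subspace and relating $\|\Gamma \circ (\text{controlled disagreement matrix})\|$ to $\|\Gamma \circ D_i\|$; this should be a routine factor-of-constant argument since the control qubit only adds a direct-sum structure. Once the subnormalized progress-measure bookkeeping is set up, the rest follows the template of the adversary lower bound for ordinary quantum query complexity essentially verbatim, with $\epsilon$ entering only through the total-explosion-probability budget.
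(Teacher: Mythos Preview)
Your proposal is correct and follows essentially the same approach as the paper: work with the unnormalized (subnormalized) states so that the bomb oracle is the linear projector $M_x$, define the adversary progress function $W_t=\sum_{x,y}\Gamma_{xy}a_x^*a_y\langle\psi^x_t|\psi^y_t\rangle$, bound the per-step change by decomposing $I-M_xM_y$ and relating the cross terms to the per-step explosion weights $\epsilon^x_t=\langle\psi^x_t|(I-M_x)|\psi^x_t\rangle$, and then apply Cauchy--Schwarz over the $T$ steps using $\sum_t\epsilon^x_t\le\epsilon$ to get $|W_0-W_T|\le 2\sqrt{\epsilon T}\max_i\|\Gamma_i\|+\epsilon\|\Gamma\|$, which yields $T=\Omega(\mathrm{Adv}^\pm(f)^2/\epsilon)$. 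The only minor point is that your ``controlled disagreement matrix'' worry is a non-issue: since $(I-M_x)M_y=\sum_{i:x_i\neq y_i}\Pi_i\cdot(I-M_x)$ with $\Pi_i=|1,i\rangle\langle 1,i|$, the control qubit is already absorbed into the $\Pi_i$'s and the standard $\Gamma_i$ matrices appear directly with no extra structural argument needed.
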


The proof of this result uses the generalized adversary bound $\text{Adv}^\pm(f)$ \cite{hoyer-2007}: we show that $B_{\epsilon}(f) = \Omega(\text{Adv}^\pm(f)^2/\epsilon)$, and then use the known result that $Q(f) = O(\text{Adv}^\pm(f))$ \cite{lee-2010}. The complete proof is given in Appendix \ref{app:lower bound proof}.

\section{Generalizations and Applications}
We now discuss applications of the result $B_{\epsilon}(f) = \Theta(Q(f)^2/\epsilon)$ that could be useful.
\subsection{Generalizing the bomb query model} \label{sect:generalize bomb}
We consider modifying the bomb query model as follows. We require that the input string $x$ can only be accessed by the following circuit:
\begin{align}
\Qcircuit @C=1em @R=.7em {
\lstick{\ket{c}}& \ctrl{1} &\ctrl{1} &\qw & \rstick{\ket{c}} \qw \\
\lstick{\ket{0}}&   \multigate{1}{O_x} &\targ & \meter & \measure{\text{bomb}}\cw & \rstick{\text{explodes if 1}}     \\
\lstick{\ket{i}}&\ghost{O_x}  & \qw &\qw & \rstick{\ket{i}} \qw \\
\lstick{\ket{a}}&\qw  & \ctrl{-2} & \qw & \qw
}
\end{align}
Compare with Circuit \ref{circ:bomb}; the difference is that there is now an extra register $\ket{a}$, and the bomb explodes only if both $x_i=a$ and the control bit is 1. In other words, the bomb explodes if $c \cdot (x_i \oplus a) = 1$. The three registers $c$, $i$, and $a$ are allowed to be entangled, however. If we discard the second register afterwards, the effect of this circuit, written as a projector, is
\begin{align}
\tilde{M}_x = \sum_{i \in [N],a \in \{0,1\}}\ket{0,i,a}\bra{0,i,a} + \sum_{i,a: x_i = a}\ket{1,i,a}\bra{1,i,a}.
\end{align}
Let $\tilde{B}_{\epsilon,\delta}(f)$ be the required number of queries to this modified bomb oracle $\tilde{M}_x$ to calculate $f(x)$ with error no more than $\delta$, with a probability of explosion no more than $\epsilon$. Using Theorem \ref{thm:main result}, we show that $\tilde{B}$ and $B$ are equivalent up to a constant:
\begin{lem} \label{lem:equivalent}
If $f:\: D \rightarrow E$, where $D \subseteq \{0,1\}^N$, and $\delta \le 1/10$ is a constant, then $B_{\epsilon,\delta}(f) = \Theta(\tilde{B}_{\epsilon,\delta}(f))$.
\end{lem}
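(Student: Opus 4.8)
The plan is to prove the two inequalities $B_{\epsilon,\delta}(f) = O(\tilde B_{\epsilon,\delta}(f))$ and $\tilde B_{\epsilon,\delta}(f) = O(B_{\epsilon,\delta}(f))$ separately. The second direction is essentially immediate: the original bomb oracle $M_x$ is the special case of $\tilde M_x$ obtained by fixing the ancilla register $\ket{a}$ to $\ket{0}$ (since $c\cdot(x_i\oplus 0) = c\cdot x_i$). Hence any algorithm using $B_{\epsilon,\delta}(f)$ calls to $M_x$ is already a valid algorithm using the same number of calls to $\tilde M_x$, giving $\tilde B_{\epsilon,\delta}(f) \le B_{\epsilon,\delta}(f)$. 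So the real content is the first direction: simulating the more powerful oracle $\tilde M_x$ using the weaker oracle $M_x$ with only constant overhead.

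For that direction I would \emph{not} try to simulate $\tilde M_x$ query-by-query (this seems hard, since flipping the ``safe'' value from $0$ to $1$ depending on an entangled register $\ket{a}$ is exactly the sort of thing $M_x$ cannot do locally). Instead I would route through Theorem \ref{thm:main result}: it suffices to show that $\tilde B_{\epsilon,\delta}(f) = \Omega(Q(f)^2/\epsilon)$, i.e.\ that the generalized bomb oracle is no more powerful (up to constants) than the ordinary quantum oracle for the purpose of query complexity, because then
$$
B_{\epsilon,\delta}(f) = O\!\left(\frac{Q(f)^2}{\epsilon}\right) = O(\tilde B_{\epsilon,\delta}(f)).
$$
To get the lower bound $\tilde B_{\epsilon,\delta}(f) = \Omega(Q(f)^2/\epsilon)$ I would mimic the structure of the proof of Theorem \ref{thm:lowerbound} (the adversary-bound argument in Appendix \ref{app:lower bound proof}): one shows that a single query to $\tilde M_x$ can change the relevant adversary potential by only a bounded amount, taking into account the ``probability of survival'' bookkeeping where the squared norm of the state is the probability the bomb has not yet exploded. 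Alternatively — and this may be cleaner to write — one can observe that $\tilde M_x$ on input $x$ is literally the ordinary bomb oracle $M_{x'}$ for a modified string: if we think of the pair $(i,a)$ as a new index ranging over $[N]\times\{0,1\}$, then $\tilde M_x$ projects onto those $(i,a)$ with $x_i = a$, i.e.\ it is $M_{y}$ where $y_{(i,a)} := x_i\oplus a \oplus 1$ (so that ``safe'' means $y_{(i,a)}=0$). This realizes $\tilde M_x$ as an ordinary bomb oracle for a string $y$ of length $2N$ that is a fixed, query-free reversible encoding of $x$; consequently a $\tilde B$-algorithm for $f(x)$ is a $B$-algorithm for the composed function $f\circ(\text{decode})$ on strings $y$, and since $Q(f\circ\text{decode}) = \Theta(Q(f))$ (decoding is free), Theorem \ref{thm:main result} applied to this composed function gives $\tilde B_{\epsilon,\delta}(f) \ge B_{\epsilon,\delta}(f\circ\text{decode}) = \Omega(Q(f)^2/\epsilon)$, as desired. (One must check that the promise/domain is handled correctly: the encoded strings $y$ live in a sub-promise of $\{0,1\}^{2N}$, namely those of the form $y_{(i,0)} = \overline{y_{(i,1)}}$, which only helps.)

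The main obstacle is making the reduction in the previous paragraph fully rigorous with respect to the ``bomb'' semantics rather than just the unitary/projector semantics: one has to verify that the identification $\tilde M_x \leftrightarrow M_y$ preserves not only the output distribution but also the probability of explosion (so that the $\epsilon$ parameter is transported exactly, not just up to constants), and that the index register of the ordinary bomb model is allowed to carry the extra bit $a$ entangled with everything else — which it is, since in Circuit \ref{circ:bomb} the index register $\ket i$ may be entangled arbitrarily. Once that is checked, both inequalities hold with constant factors and the lemma follows; the constant $\delta\le 1/10$ assumption is used only to invoke Theorem \ref{thm:main result} / Lemma \ref{lem:delta} in the form stated.
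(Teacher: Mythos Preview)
Your proposal is correct and follows essentially the same route as the paper: the paper also realizes $\tilde M_x$ as the ordinary bomb oracle $M_{\tilde x}$ for a doubled string $\tilde x\in\{0,1\}^{2N}$ (your $y$), obtains $\tilde B_{\epsilon,\delta}(f)=B_{\epsilon,\delta}(\tilde f)$ for the induced function $\tilde f$ (your $f\circ\text{decode}$), and then invokes Theorem~\ref{thm:main result} together with $Q(\tilde f)=Q(f)$. One small slip: your encoding should read $y_{(i,a)}=x_i\oplus a$ rather than $x_i\oplus a\oplus 1$, since ``safe'' in $\tilde M_x$ is $x_i=a$, i.e.\ $x_i\oplus a=0$.
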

\begin{proof}
It should be immediately obvious that $B_{\epsilon,\delta}(f) \ge \tilde{B}_{\epsilon,\delta}(f)$, since a query in the $B$ model can be simulated by a query in the $\tilde{B}$ model by simply setting $a=0$. In the following we show that $B_{\epsilon,\delta}(f) = O(\tilde{B}_{\epsilon,\delta}(f))$.

For each string $x \in \{0,1\}^N$, define the string $\tilde{x} \in \{0,1\}^{2N}$ by concatenating two copies of $x$ and flipping every bit of the second copy. In other words,
\begin{align} \label{eq:tildex}
\tilde{x}_i = \begin{cases} x_i &\mbox{if } i \le N \\ 1-x_{i-N} &\mbox{if } i > N \end{cases}.
\end{align}
Let $\tilde{D} = \{\tilde{x}: x \in D\}$. Given a function $f:\: D \rightarrow \{0,1\}$, define $\tilde{f}:\: \tilde{D} \rightarrow \{0,1\}$ by $\tilde{f}(\tilde{x}) = f(x)$. 

We claim that a $\tilde{B}$ query to $x$ can be simulated by a $B$ query to $\tilde{x}$. This can be seen by comparing $\tilde{M}_x$:
\begin{align}
\tilde{M}_x = \sum_{i \in [N],a}\ket{0,i,a}\bra{0,i,a} + \sum_{i\in[N],a: x_i = a}\ket{1,i,a}\bra{1,i,a}.
\end{align}
and $M_{\tilde{x}}$:
\begin{align}
M_{\tilde{x}} = \sum_{\tilde{i} \in [2N]}\ket{0,\tilde{i}}\bra{0,\tilde{i}} + \sum_{\tilde{i} \in [2N]: \tilde{x}_i=0}\ket{1,\tilde{i}}\bra{1,\tilde{i}}.
\end{align}
Recalling the definition of $\tilde{x}$ in \ref{eq:tildex}, we see that these two projectors are exactly equal if we encode $\tilde{i}$ as $(i,a)$, where $i \equiv \tilde{i} \mod N$ and $a = \lfloor i / N \rfloor$.

Since $\tilde{f}(\tilde{x}) = f(x)$, we thus have $\tilde{B}_{\epsilon,\delta}(f) = B_{\epsilon,\delta}(\tilde{f})$. Our result then readily follows; it can easily be checked that $Q(f) = Q(\tilde{f})$, and therefore by Theorem \ref{thm:main result},
\begin{align}
\tilde{B}_{\epsilon,\delta}(f) &= B_{\epsilon,\delta}(\tilde{f}) = \Theta\L(\frac{Q(\tilde{f})^2}{\epsilon}\R) \nn \\
&= \Theta\L(\frac{Q(f)^2}{\epsilon}\R)
\end{align}
\end{proof}

There are some advantages to allowing the projector $\tilde{M}_x$ instead of $M_x$. First of all, the inputs 0 and 1 in $x$ are finally manifestly symmetric, unlike that in $M_x$ (the bomb originally blew up if $x_i=1$, but not if $x_i=0$). Moreover, we now allow the algorithm to \emph{guess} an answer to the query (this answer may be entangled with the index register $i$), and the bomb blows up only if the guess is wrong, controlled on $c$. This flexibility may allow more leeway in designing algorithms for the bomb query model, as we soon utilize.

\subsection{Using classical algorithms to design bomb query algorithms}
We now demonstrate the possibility that we can prove \emph{nonconstructive} upper bounds on $Q(f)$ for some functions $f$, by creating bomb query algorithms and using that $Q(f) = \Theta(\sqrt{\epsilon B_{\epsilon}(f)})$. Consider for example the following classical algorithm for the OR function:
\begin{alg}[Classical algorithm for OR]
Pick some arbitrary ordering of the $N$ bits, and query them one by one, terminating as soon as a 1 is seen. Return 1 if a 1 was queried; otherwise return 0.
\end{alg}
We can convert this immediately to a bomb query algorithm for OR, by using the construction in the proof of Theorem \ref{thm:upperbound}. That construction allows us to implement the operation $O_x$ in $O(\epsilon^{-1})$ queries, with $O(\epsilon)$ error and probability of explosion if $x_i=1$ (but no error if $x_i=0$). Thus we have the following:
\begin{alg}[Bomb algorithm for OR]
Query the $N$ bits one-by-one, and apply the construction of Theorem \ref{thm:upperbound} one bit at a time, using $O(1/\epsilon)$ operations each time. Terminate as soon as a 1 is seen, and return 1; otherwise return 0 if all bits are 0.
\end{alg}
Since the algorithm ends as soon as a 1 is found, the algorithm only accumulates $\epsilon$ error in total. Thus this shows $B_{\epsilon}(OR) = O(N/\epsilon)$.

Note, however, that we have already shown that $Q(f) = \Theta(\sqrt{\epsilon B_{\epsilon}(f)})$ for boolean $f$. An $O(N/\epsilon)$ bomb query algorithm for OR therefore implies that $Q(OR) = O(\sqrt{N})$. We have showed the existence of an $O(\sqrt{N})$ quantum algorithm for the OR function, without actually constructing one!

We formalize the intuition in the above argument by the following theorem: 
\begin{thm} \label{thm:classical}
Let $f:\:D \rightarrow E$, where $D \subseteq \{0,1\}^N$. Suppose there is a classical randomized query algorithm $\mathcal{A}$, that makes at most T queries, and evaluates $f$ with bounded error. Let the query results of $\mathcal{A}$ on random seed $s_{\mathcal{A}}$ be $x_{p_1},x_{p_2},\cdots,x_{p_{\tilde{T}(x)}}$, $\tilde{T}(x) \le T$, where $x$ is the hidden query string.

Suppose there is another (not necessarily time-efficient) randomized algorithm $\mathcal{G}$, with random seed $s_{\mathcal{G}}$, which takes as input $x_{p_1},\cdots,x_{p_{t-1}}$ and $s_\mathcal{A}$, and outputs a guess for the next query result $x_{p_{t}}$ of $\mathcal{A}$. Assume that $\mathcal{G}$ makes no more than an expected total of $G$ mistakes (for all inputs $x$). In other words,
\begin{equation}
\Expect_{s_{\mathcal{A}},s_{\mathcal{G}}} \L\{\sum_{t=1}^{\tilde{T}(x)}\L| \mathcal{G}(x_{p_1},\cdots,x_{p_{t-1}},s_{\mathcal{A}},s_{\mathcal{G}}) - x_{p_t} \R| \R\} \le G \quad \forall x.
\end{equation}
Note that $\mathcal{G}$ is given the random seed $s_{\mathcal{A}}$ of $\mathcal{A}$, so it can predict the next  query index of $\mathcal{A}$. \\
Then $B_{\epsilon}(f) = O(TG/\epsilon)$, and thus (by Theorem \ref{thm:main result}) $Q(f) = O(\sqrt{TG})$. 
\end{thm}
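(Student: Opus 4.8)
The plan is to convert the classical algorithm $\mathcal{A}$ together with its guesser $\mathcal{G}$ into a bomb query algorithm in the \emph{symmetric} bomb model (using $\tilde M_x$), and then invoke Lemma~\ref{lem:equivalent} to transfer back to the asymmetric model, and Theorem~\ref{thm:main result} to obtain the $Q(f)$ bound. The key point is that the symmetric bomb oracle $\tilde M_x$ lets us query index $i$ while supplying a guess $a$ for $x_i$, and the bomb only explodes when $a \neq x_i$. So by the construction in the proof of Theorem~\ref{thm:upperbound} (applied to $\tilde M_x$ rather than $M_x$), we can learn the true value $x_{p_t}$ using $O(1/\epsilon')$ bomb queries, with error $O(\epsilon')$ and probability of explosion $O(\epsilon')$ \emph{only when the guess was wrong}, i.e. only on those queries where $\mathcal{G}$ makes a mistake.

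The algorithm is then: fix random seeds $s_{\mathcal{A}}, s_{\mathcal{G}}$; simulate $\mathcal{A}$ step by step; at step $t$, let $\mathcal{G}$ produce a guess $g_t$ for $x_{p_t}$, and use the Zeno construction with parameter $\epsilon' = \epsilon/(3G)$ (say) to determine $x_{p_t}$ exactly (up to the small amplitude error), feeding the result back into $\mathcal{A}$; continue until $\mathcal{A}$ halts, and output its answer. First I would bound the probability of explosion: summing over the run, the expected number of wrong guesses is at most $G$, and each wrong guess triggers an explosion with probability $O(\epsilon') = O(\epsilon/G)$, so by linearity of expectation (and Markov, to convert expected explosions into a worst-case-$x$ probability bound) the total explosion probability is $O(\epsilon)$. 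Next I would bound the error: there are at most $T$ steps, each introducing $O(\epsilon') = O(\epsilon/G)$ amplitude/probability error in the simulated query, but in fact only the mistaken steps (expected $\le G$ of them) carry nonzero error, so the accumulated error beyond $\mathcal{A}$'s own bounded error is $O(\epsilon' G) = O(\epsilon)$, which is a constant $<1/10$ after rescaling; hence the algorithm computes $f$ with bounded error $\delta = O(1) < 1/10$. The query count is $O(T/\epsilon') = O(TG/\epsilon)$ bomb queries, giving $\tilde B_{\epsilon}(f) = O(TG/\epsilon)$, hence $B_{\epsilon}(f) = O(TG/\epsilon)$ by Lemma~\ref{lem:equivalent}, and $Q(f) = O(\sqrt{\epsilon B_{\epsilon}(f)}) = O(\sqrt{TG})$ by Theorem~\ref{thm:main result}.

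The main obstacle I anticipate is the error-accumulation bookkeeping when $\mathcal{A}$ is \emph{adaptive} and randomized: the Zeno construction produces a superposition over "query answered correctly" and "bomb exploded" branches, and on the surviving branch a small $O(\epsilon')$ amplitude error in the reported bit value. Because subsequent queries of $\mathcal{A}$ depend on earlier (possibly slightly corrupted) answers, one must argue that these errors compose additively rather than blowing up — this is handled exactly as in the proof of Theorem~\ref{thm:upperbound}, by tracking the state-vector distance $\|\ket{\psi'} - \ket{\psi}\|$ through the whole circuit and using $|\bra{\psi'}P\ket{\psi'} - \bra{\psi}P\ket{\psi}| \le 2\|\ket{\psi'}-\ket{\psi}\|$ for the final measurement projector $P$. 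A secondary subtlety is that $G$ bounds only the \emph{expected} number of mistakes, not the worst case, so one cannot condition on "few mistakes"; instead one keeps everything in expectation over $s_{\mathcal{A}}, s_{\mathcal{G}}$ until the very end and applies Markov's inequality once, to the combined "explosion or error" event, to land inside the $(\epsilon, \delta)$ promise of the bomb model. (This is also where the similarity to Kothari's oracle identification analysis \cite{kothari-2014} enters, and I would cite it for the detailed accounting.)
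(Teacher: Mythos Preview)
Your approach is essentially the same as the paper's: simulate $\mathcal{A}$ step by step in the symmetric bomb model, using the Zeno construction with guess $a$ supplied by $\mathcal{G}$, and set the per-step explosion probability to $O(\epsilon/G)$ so that linearity of expectation over the random seeds bounds the total explosion probability by $O(\epsilon)$; then invoke Lemma~\ref{lem:equivalent} and Theorem~\ref{thm:main result}.

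You overcomplicate one point, however. The paper's per-query circuit (a one-sided variant of the construction in Theorem~\ref{thm:upperbound}, with no uncomputation step) has the property that, conditioned on the bomb not exploding, the measured bit is \emph{always exactly} $x_{p_t}$: if the guess is correct the ancilla rotates deterministically to $\ket{1}$, and if the guess is wrong the ancilla is projected back to $\ket{0}$ with shrinking amplitude but no bit-flip. Hence there is no ``small $O(\epsilon')$ amplitude error in the reported bit value'' to propagate through the adaptive choices of $\mathcal{A}$; the only deviation from a perfect classical run of $\mathcal{A}$ is the explosion event itself. This removes your ``main obstacle'' entirely, and also makes Markov's inequality unnecessary: the probability of explosion is already bounded for every $x$ by $(\epsilon/G)\cdot\Expect_{s_{\mathcal{A}},s_{\mathcal{G}}}[\text{\# wrong guesses}] \le \epsilon$ via a union bound and linearity of expectation, with no conditioning on ``few mistakes'' required.
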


As an example, in our simple classical example for OR we have $T=N$ (the algorithm takes at most $N$ steps) and $G=1$ (the guessing algorithm always guesses the next query to be 0; since the algorithm terminates on a 1, it makes at most one mistake).

\begin{proof}[Proof of theorem \ref{thm:classical}]
We generalize the argument in the OR case. We take the classical algorithm and replace each classical query by the construction of Theorem \ref{thm:upperbound}, using $O(G/\epsilon)$ bomb queries each time. On each query, the bomb has a $O(\epsilon/G)$ chance of exploding when the guess is wrong, and no chance of exploding when the guess is correct. Therefore the total probability of explosion is $O(\epsilon/G)\cdot G = O(\epsilon)$. The total number of bomb queries used is $O(TG/\epsilon)$.

For the full technical proof, see Appendix \ref{app:nonconstructive}.
\end{proof}

\subsection{Explicit quantum algorithm for Theorem \ref{thm:classical}}
In this section we give an explicit quantum algorithm, in the setting of Theorem \ref{thm:classical}, that reproduces the given query complexity. This algorithm is very similar to the one given by R. Kothari for the oracle identification problem \cite{kothari-2014}.

\begin{thm} \label{thm:classical to quantum}
Under the assumptions of Theorem \ref{thm:classical}, there is an explicit quantum algorithm for $f$ with query complexity $O(\sqrt{TG})$.
\end{thm}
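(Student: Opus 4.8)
The plan is to quantize the classical algorithm $\mathcal{A}$ directly, using the oracle only to locate the places where the predictor $\mathcal{G}$ errs. Fix the random seeds $s_{\mathcal{A}},s_{\mathcal{G}}$ once at the outset. If one pretends that every guess $g_t=\mathcal{G}(\dots)$ equals the true bit $x_{p_t}$, then the whole run of $\mathcal{A}$ — which index $p_t$ it probes at step $t$, and its final output — is a fixed classical computation using \emph{no} queries at all. The only thing that can go wrong is a step $t$ with $g_t\neq x_{p_t}$; by hypothesis there are at most $G$ such ``mistakes'' in expectation. So it suffices to discover the mistakes one at a time, \emph{in increasing order of $t$}: having learned the true values at mistakes $q_1<\dots<q_j$ (and knowing the true value at every other probed step, since there the guess was correct), we can classically simulate $\mathcal{A}$ forward from scratch, so the remaining task is to find the smallest $t>q_j$ at which the guess is wrong. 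The key point is that, for a fixed prefix of already-discovered mistakes, the map $t\mapsto(p_t,g_t)$ is a deterministic classical circuit, so ``is step $t$ a mistake?'' is a predicate computable with one oracle query, and it can be evaluated in superposition over $t$ — exactly the marked-element predicate for an amplitude-amplification search.

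To locate the next mistake cheaply I would use a doubling (exponential) search: for window sizes $\ell=1,2,4,\dots$ run an amplitude-amplification existence test over the window $[q_j+1,\,q_j+\ell]$ at cost $O(\sqrt{\ell})$ queries, and as soon as a window is found to contain a mistake, run a quantum minimum-finding search \emph{inside that window} to pin down its first mistake $q_{j+1}$; since the windows are prefixes of $[q_j+1,T]$, the first mistake in the first successful window is genuinely $q_{j+1}$, which keeps the discovery process in order. The cost of finding $q_{j+1}$ is then $O(\sqrt{q_{j+1}-q_j})$ (geometric series in the window sizes), and a final pass over $[q_k+1,T]$ confirms no further mistakes. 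Writing $q_0=0$ and letting $k$ be the number of mistakes, the total query count is, up to constants and error-reduction overhead,
\begin{equation}
\sum_{j=0}^{k-1}\sqrt{q_{j+1}-q_j}\;+\;\sqrt{T-q_k}\;\le\;\sqrt{k}\,\sqrt{\textstyle\sum_{j=0}^{k-1}(q_{j+1}-q_j)}\;+\;\sqrt{T}\;\le\;\sqrt{kT}+\sqrt{T},
\end{equation}
by Cauchy--Schwarz; since typically $k=O(G)$ and $G\ge 1$, this is $O(\sqrt{TG})$, matching the nonconstructive bound that Theorem \ref{thm:main result} extracts from $B_\epsilon(f)=O(TG/\epsilon)$.

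The remaining work is bookkeeping, and I expect the error accounting to be the crux. First, $G$ bounds only the \emph{expected} number of mistakes, so I would abort and output a default value once more than $CG$ mistakes have been found; by Markov's inequality this truncation fails with probability $\le 1/C$, which is fine for a bounded-error algorithm (and $\mathcal{A}$ itself is only bounded-error anyway). Second, each of the $O(k\log T)$ existence tests and minimum-finding calls is individually correct only with constant probability, so naively boosting each to error $1/\poly(TG)$ and taking a union bound costs an extra $O(\log(TG))$ factor; shaving this to a clean $O(\sqrt{TG})$ requires the more careful accounting from Kothari's oracle-identification analysis \cite{kothari-2014} (roughly, charging failure probabilities against the geometrically growing window sizes so that the overhead telescopes), which I would invoke rather than reprove. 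Third, one should check that the in-superposition simulation of $\mathcal{A}$ and $\mathcal{G}$ is a legitimate unitary subroutine: it is, because with the discovered-mistake prefix held as classical data, computing $p_t$ and $g_t$ and comparing them to the single queried bit $x_{p_t}$ is reversible and uses exactly one oracle call per amplitude-amplification iteration (time-efficiency of $\mathcal{G}$ is irrelevant here). Putting these pieces together — the reduction to in-order mistake-finding, the doubling search, the Cauchy--Schwarz count, and the Kothari-style error analysis — yields the explicit $O(\sqrt{TG})$-query algorithm (Algorithm \ref{alg:classical}).
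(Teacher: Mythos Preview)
Your proposal is correct and follows essentially the same approach as the paper: the paper's Algorithm~\ref{alg:classical} simulates $\mathcal{A}$ under the assumption that $\mathcal{G}$'s guesses are correct, locates the mistakes one by one in order using a first-marked-element subroutine (Theorem~\ref{thm:first marked element}, which is precisely your doubling-search-plus-minimum-finding), bounds the total cost by Cauchy--Schwarz as $\sum_j\sqrt{d_j-d_{j-1}}=O(\sqrt{TG})$, truncates at $100G$ iterations via Markov, and invokes Kothari's span-program composition \cite{kothari-2014} to remove the logarithmic error-reduction overhead. The only cosmetic difference is that the paper runs minimum-finding directly on each doubling window rather than a separate existence test first, but the complexity and logic are identical.
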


\begin{proof}
We will construct this algorithm (Algorithm \ref{alg:classical}) shortly. We need the following quantum search algorithm as a subroutine:
\begin{thm}[Finding the first marked element in a list] \label{thm:first marked element}
Suppose there is an ordered list of $N$ elements, and each element is either marked or unmarked. Then there is a bounded-error quantum algorithm for finding the \textbf{first} marked element in the list (or determines that no marked elements exist), such that:
\begin{itemize}
\item If the first marked element is the $d$-th element of the list, then the algorithm uses an expected $O(\sqrt{d})$ time and queries.
\item If there are no marked elements, then the algorithm uses $O(\sqrt{N})$ time and queries, but always determines correctly that no marked elements exist.
\end{itemize}
\end{thm}

This algorithm is straightforward to derive given the result in \cite{durr-1996}, and was already used in Kothari's algorithm \cite{kothari-2014}. We give the algorithm (Algorithm \ref{alg:first marked element}) and its analysis in Appendix \ref{app:first marked element}.

We now give our explicit quantum algorithm.

\begin{alg}[Simulating a classical query algorithm by a quantum one] \label{alg:classical}   \leavevmode

\setlength{\parskip}{0.5em}
\noindent \emph{Input.} Classical randomized algorithm $\mathcal{A}$ that computes $f$ with bounded error. Classical randomized algorithm $\mathcal{G}$ that guesses queries of $\mathcal{A}$. Oracle $O_x$ for the hidden string $x$.

\noindent \emph{Output.} $f(x)$ with bounded error.


The quantum algorithm proceeds by attempting to produce the list of queries and results that $\mathcal{A}$ would have made. More precisely, given a randomly chosen random seed $s_{\mathcal{A}}$, the quantum algorithm outputs (with constant error) a list of pairs $(p_1(x), x_{p_1(x)}),\cdots,(p_{\tilde{T}(x)}(x), x_{\tilde{T}(x)}{p_i(x)})$. This list is such that on random seed $s_\mathcal{A}$, the $i$-th query algorithm of $\mathcal{A}$ is made at the position $p_i(x)$, and the query result is $x_{p_i(x)}$. The quantum algorithm then determines the output of $\mathcal{A}$ using this list.

\setlength{\parskip}{0em}

The main idea for the algorithm is this: we first assume that the guesses made by $\mathcal{G}$ are correct. By repeatedly feeding the output of $\mathcal{G}$ back into $\mathcal{A}$ and $\mathcal{G}$, we can obtain a list of query values for $\mathcal{A}$ without any queries to the actual black box. We then search for the first deviation of the string $x$ from the predictions of $\mathcal{G}$; assuming the first deviation is the $d_1$-th query, by Theorem \ref{thm:first marked element} the search takes $O(\sqrt{d_1})$ queries (ignoring error for now). We then know that all the guesses made by $\mathcal{G}$ are correct up to the $(d_1-1)$-th query, and incorrect for the $d_1$-th query.

With the corrected result of the first $d_1$ queries, we now continue by assuming again the guesses made by $\mathcal{G}$ are correct starting from the $(d_1+1)$-th query, and search for the location of the next deviation, $d_2$. This takes $O(\sqrt{d_2-d_1})$ queries; we then know that all the guesses made by $\mathcal{G}$ are correct from the $(d_1+1)$-th to $(d_2-1)$-th query, and incorrect for the $d_2$-th one. Continuing in this manner, we eventually determine all query results of $\mathcal{A}$ after an expected $G$ iterations.

We proceed to spell out our algorithm. For the time being, we assume that algorithm for Theorem \ref{thm:first marked element} has no error and thus requires no error reduction.

\begin{enumerate}
\item Initialize random seeds $s_\mathcal{A}$ and $s_\mathcal{G}$ for $\mathcal{A}$ and $\mathcal{G}$. We will simulate the behavior of $\mathcal{A}$ and $\mathcal{G}$ on these random seeds.  Initialize $d= 0$. $d$ is such that we have determined the values of all query results of $\mathcal{A}$ up to the $d$-th query. Also initialize an empty list $\mathcal{L}$ of query pairs.
\item Repeat until either all query results of $\mathcal{A}$ are determined, or $100G$ iterations of this loop have been executed: \label{step:loop}
\begin{enumerate}
\item Assuming that $\mathcal{G}$ always guesses correctly starting from the $(d+1)$-th query, compute from $\mathcal{A}$ and $\mathcal{G}$ a list of query positions $p_{d+1},p_{d+2},\cdots$ and results $\tilde{a}_{d+1},\tilde{a}_{d+2},\cdots$. This requires no queries to the black box.
\item \label{step:d*}Using our algorithm for finding the first marked element (Theorem \ref{thm:first marked element}, Algorithm \ref{alg:first marked element}), find the first index $d^* > d$ such that the actual query result of $\mathcal{A}$ differs from the guess by $\mathcal{G}$, i.e. $x_{p_d} \neq \tilde{a}_d$; or return that no such $d^*$ exists. This takes $O(\sqrt{d^*-d })$ time in the former case, and $O(\sqrt{T-d})$ time in the latter.
\item We break into cases: 
\begin{enumerate}
\item If an index $d^*$ was found in Step \ref{step:d*}, then the algorithm decides the next mistake made by $\mathcal{G}$ is at position $d^*$. It thus adds the query pairs  $(p_{d+1},\tilde{a}_{d+1}),\cdots,(p_{d^*-1},\tilde{a}_{d^*-1})$, and the pair $(p_{d^*},1-\tilde{a}_{d^*})$, to the list $\mathcal{L}$. Also set $d=d^*$.
\item If no index $d^*$ was found in Step \ref{step:d*}, the algorithm decides that all remaining guesses by $\mathcal{G}$ are correct. Thus the query pairs  $(p_{d+1},\tilde{a}_{d+1}),\cdots,(p_{\tilde{T}(x)},\tilde{a}_{\tilde{T}(x)})$ are added to $\mathcal{L}$, where $\tilde{T}(x) \le T$ is the number of queries made by $\mathcal{A}$.
\end{enumerate}
\end{enumerate}
\item If the algorithm found all query results of $\mathcal{A}$ in $100G$ iterations of step  \ref{step:loop}, use $\mathcal{L}$ to calculate the output of $\mathcal{A}$; otherwise the algorithm fails.
\end{enumerate}

\end{alg}
We now count the total number of queries. Suppose $g \le 100G$ is the number of iterations of Step \ref{step:loop}; if all query results have been determined, $g$ is the number of wrong guesses by $\mathcal{G}$. Say the list of $d$'s found is $d_0 = 0, d_1,\cdots,d_{g}$. Let $d_{g+1} = T$. Step 2 is executed for $g+1$ times, and the total number of queries is
\begin{equation}
O\L(\sum_{i=1}^{g+1} \sqrt {d_i-d_{i-1}}\R) = O\L(\sqrt{Tg} \R) = O\L(\sqrt{TG} \R)
\end{equation}
by the Cauchy-Schwarz inequality.

We now analyze the error in our algorithm. The first source of error is cutting off the loop in Step \ref{step:loop}: by Markov's inequality, for at least 99\% of random seeds $s_{\mathcal{G}},s_{\mathcal{G}}$, $\mathcal{G}$ makes no more than $100G$ wrong guesses. For these random seeds all query results of $\mathcal{A}$ are determined. Cutting off the loop thus gives at most $0.01$ error. 

The other source of error is the error of Algorithm  \ref{alg:first marked element} used in Step \ref{step:d*}: we had assumed that it could be treated as zero-error, but we now remove this assumption. Assuming each iteration gives error $\delta'$, the total error accrued could be up to $O(g\delta')$. It seems as if we would need to set $\delta' = O(1/G)$ for the total error to be constant, and thus gain an extra logarithmic factor in the query complexity. 

However, in his paper for oracle identification \cite{kothari-2014}, Kothari showed that multiple calls to Algorithm \ref{alg:first marked element} can be composed to obtain a bounded-error algorithm based on span programs without an extra logarithmic factor in the query complexity; refer to \cite[Section 3]{kothari-2014} for details. Therefore we can replace the iterations of Step \ref{step:loop} with Kothari's span program construction and get a bounded error algorithm with complexity $O(\sqrt{TG})$.


\end{proof}

Note that while Algorithm \ref{alg:classical} has query complexity $O(\sqrt{TG})$, the time complexity may be much higher. After all, Algorithm \ref{alg:classical} proceeds by simulating $\mathcal{A}$ query-by-query, although the number of actual queries to the oracle is smaller. Whether or not we can get a algorithm faster than $\mathcal{A}$ using this approach may depend on the problem at hand.

\section{Improved upper bounds on quantum query complexity}
We now use Theorem \ref{thm:classical to quantum} to improve the quantum query complexity of certain graph problems.

\subsection{Single source shortest paths for unweighted graphs}
\begin{problem}[Single source shortest paths (SSSP) for unweighted graphs] The adjacency matrix of a directed graph $n$-vertex graph $G$ is provided as a black box; a query on the pair $(v,w)$ returns $1$ if there is an edge from $v$ to $w$, and $0$ otherwise. We are given a fixed vertex $v_{start}$. Call the length of a shortest path from $v_{start}$ to another vertex $w$ the \emph{distance} $d_w$ of $w$ from $v_{start}$; if no path exists, define $d_w = \infty$. Our task is to find $d_w$ for all vertices $w$ in $G$.
\end{problem}

In this section we shall show the following:

\begin{thm}
The quantum query complexity of single-source shortest paths in an unweighted graph is $\Theta(n^{3/2})$ in the adjacency matrix model.
\end{thm}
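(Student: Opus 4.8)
The plan is to prove the matching bounds separately: $Q(\mathrm{SSSP})=O(n^{3/2})$ from Theorem~\ref{thm:classical to quantum} applied to breadth-first search, and $Q(\mathrm{SSSP})=\Omega(n^{3/2})$ by a reduction from many disjoint copies of unstructured search.

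\textbf{Upper bound.} I would take the classical algorithm $\mathcal{A}$ to be ordinary breadth-first search started at $v_{start}$, implemented in the adjacency-matrix model so that it never wastes a query on an already-discovered vertex: maintain the set $U$ of undiscovered vertices, initialized to $V\setminus\{v_{start}\}$; each time a vertex $v$ is dequeued, query the ordered pair $(v,w)$ for every $w$ currently in $U$, and whenever such a query returns $1$ set $d_w:=d_v+1$, delete $w$ from $U$, and enqueue $w$; when the queue empties, output $d_w=\infty$ for every $w$ still in $U$. Correctness is the standard BFS argument. For the query count, at most $n$ vertices are ever dequeued and each dequeue triggers fewer than $n$ queries, so $T=O(n^2)$. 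For the guessing algorithm $\mathcal{G}$ I would simply always predict $0$; a query of $\mathcal{A}$ returns $1$ only when a new vertex is discovered, which happens at most $n-1$ times over the whole run, so $G=O(n)$. Theorem~\ref{thm:classical to quantum} (Algorithm~\ref{alg:classical}) then yields a quantum algorithm for SSSP with $O(\sqrt{TG})=O(\sqrt{n^{2}\cdot n})=O(n^{3/2})$ queries.

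\textbf{Lower bound.} I would reduce from the task of evaluating $m:=\lfloor(n-1)/2\rfloor$ independent copies of $\mathrm{OR}_m$: given inputs $y^{(1)},\dots,y^{(m)}\in\{0,1\}^{m}$, consider the $n$-vertex directed graph with vertices $v_{start}$, $a_1,\dots,a_m$, $b_1,\dots,b_m$ (plus at most one isolated vertex for parity), the fixed edges $v_{start}\to a_i$ for all $i$, the variable edges $a_i\to b_j$ exactly when $y^{(j)}_i=1$, and no other edges. Then $d_{b_j}=2$ if $\mathrm{OR}_m(y^{(j)})=1$ and $d_{b_j}=\infty$ otherwise, while every adjacency query made by an SSSP algorithm on this family is either answered for free (a fixed pair) or is one query to the bits $y^{(j)}_i$. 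Hence a bounded-error $T$-query quantum algorithm for SSSP on $n$ vertices yields a bounded-error $T$-query quantum algorithm that outputs all $m$ values $\mathrm{OR}_m(y^{(j)})$. By a strong direct product theorem for quantum query complexity, this multi-output task needs $\Omega(m\sqrt{m})=\Omega(n^{3/2})$ queries, so $Q(\mathrm{SSSP})=\Omega(n^{3/2})$.

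The BFS counting and the graph construction are routine. The step that needs real care is the lower bound's appeal to a direct product (or direct sum) theorem to conclude $\Omega(m^{3/2})$: one must exploit that \emph{all} $m$ OR-answers are required, since the single ``global'' OR over all $m^2$ bits costs only $\Theta(m)$ queries, and one must check that such a theorem applies in the constant-error, multi-output regime and that the reduction preserves its hypotheses. Everything else follows from results already established in the paper. An alternative to the OR-copies reduction is to reduce instead from deciding undirected graph connectivity, known to have quantum query complexity $\Omega(n^{3/2})$ in the adjacency-matrix model: running our SSSP algorithm from an arbitrary vertex and testing whether every distance is finite decides connectivity, with each directed query answered by a single undirected query.
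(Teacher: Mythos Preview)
Your upper bound is exactly the paper's: both run breadth-first search (Algorithm~\ref{alg:bfs}), always guess $0$, count $T=O(n^{2})$ and $G\le n-1$, and invoke Theorem~\ref{thm:classical to quantum} to get $O(n^{3/2})$.

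For the lower bound the paper does not argue anything new; it simply quotes the $\Omega(n^{3/2})$ bound from \cite{durr-2004}. Your second route, the reduction from graph connectivity, is in fact the content of that citation (connectivity already requires $\Omega(n^{3/2})$ queries in the adjacency-matrix model, and SSSP decides connectivity by checking whether all distances are finite), so here you are essentially reconstructing the cited argument rather than taking a different one. Your first route via $m$ disjoint copies of $\mathrm{OR}_m$ is correct but more circuitous: the needed $\Omega(m^{3/2})$ bound does follow from the additivity of the adversary bound over disjoint inputs (or from a strong direct product theorem), so the step you flagged can be filled in, but it imports heavier machinery than a single citation to the known connectivity lower bound. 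In short, both of your lower-bound arguments work; the connectivity one is closer to what the paper actually relies on and avoids the direct-product detour.
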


\begin{proof}
The lower bound of $\Omega(n^{3/2})$ is known \cite{durr-2004}. We show the upper bound by applying Theorem \ref{thm:classical to quantum} to a classical algorithm. The following well-known classical algorithm (commonly known as \emph{breadth first search}, BFS) solves this problem:
\begin{alg}[Classical algorithm for unweighted SSSP] \label{alg:bfs} \leavevmode
\begin{enumerate}
\item Initialize $d_w := \infty$ for all vertices $w\neq v_{start}$, $d_{v_{start}}:=0$, and $\mathcal{L}:=(v_{start})$. $\mathcal{L}$ is the ordered list of vertices for which we have determined the distances, but whose outgoing edges we have not queried.
\item Repeat until $\mathcal{L}$ is empty:
\begin{itemize}
\item Let $v$ be the first (in order of time added to $\mathcal{L}$) vertex in $\mathcal{L}$. For all vertices $w$ such that $d_w = \infty$:
\begin{itemize}
\item Query $(v,w)$.
\item If $(v,w)$ is an edge, set $d_w:= d_v + 1$  and add $w$ to the end of $\mathcal{L}$.
\end{itemize}
\item Remove $v$ from $\mathcal{L}$.
\end{itemize}
\end{enumerate}
\end{alg}
We omit the proof of correctness of this algorithm (see for example \cite{clrs}). This algorithm uses up to $T=O(n^2)$ queries. If the guessing algorithm always guesses that $(v,w)$ is not an edge, then it makes at most $G=n-1$ mistakes; hence $Q(f) = O(\sqrt{TG}) = O(n^{3/2})$.\footnote{It seems difficult to use our method to give a corresponding result for the adjacency list model; after all, the result of a query is much harder to guess when the input alphabet is non-boolean.}

\end{proof}

The previous best known quantum algorithm for unweighted SSSP, to our best knowledge, was given by Furrow \cite{furrow-2008}; that algorithm has query complexity $O(n^{3/2}\sqrt{\log n})$.

We now consider the quantum query complexity of unweighted $k$-source shortest paths (finding $k$ shortest-path trees rooted from $k$ beginning vertices). If we apply Algorithm \ref{alg:bfs} on $k$ different starting vertices, then the expected number of wrong guesses is no more than $G = k(n-1)$; however, the total number of edges we query need not exceed $T = O(n^2)$, since an edge never needs to be queried more than once. Therefore
\begin{cor}
The quantum query complexity of unweighted $k$-source shortest paths in the adjacency matrix model is $O(k^{1/2}n^{3/2})$, where $n$ is the number of vertices.
\end{cor}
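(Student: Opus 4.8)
The plan is to derive the bound by applying Theorem~\ref{thm:classical to quantum} to a single classical query algorithm for the $k$-source problem, with the guessing subroutine $\mathcal{G}$ chosen exactly as in the paragraph preceding the statement. First I would describe the classical algorithm $\mathcal{A}$: maintain one global table of edge-query results (all entries initially ``unqueried''), and run Algorithm~\ref{alg:bfs} from each of the $k$ source vertices in turn, with the modification that whenever a search would query a pair $(v,w)$ whose value already appears in the table, the stored value is reused instead of issuing a fresh oracle query. Each of the at most $n^2$ ordered pairs is then queried at most once over the whole run, so $\mathcal{A}$ makes $T = O(n^2)$ queries, and it outputs all $k$ distance functions correctly by the correctness of breadth-first search.

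Next I would specify $\mathcal{G}$: it always predicts that the next queried pair is a non-edge. A mistake occurs precisely when a queried pair turns out to be an edge, i.e. when some breadth-first search discovers a new tree edge; since each of the $k$ searches discovers at most $n-1$ tree edges, the total number of mistakes is at most $k(n-1)$ for every input $x$, so we may take $G = k(n-1) = O(kn)$. Both $\mathcal{A}$ and $\mathcal{G}$ are deterministic here, so the expectation in the hypothesis of Theorem~\ref{thm:classical} is vacuous, and $\mathcal{G}$ is trivially given enough information (the seed $s_{\mathcal{A}}$, which here is empty) to know which pair will be queried next and in which of the $k$ searches. Theorem~\ref{thm:classical to quantum} then produces an explicit quantum algorithm with query complexity $O(\sqrt{TG}) = O\bigl(\sqrt{n^2 \cdot kn}\,\bigr) = O(k^{1/2} n^{3/2})$, which is the claimed bound.

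The only point requiring care — and the main (mild) obstacle — is checking that the ``shared-table'' version of repeated BFS genuinely fits the template of Theorem~\ref{thm:classical}: the query alphabet must be boolean (it is: edge versus non-edge), the relevant query count $T$ is the number of oracle calls actually made by the composite algorithm and not $k$ times the cost of one BFS (this is exactly what the shared table buys us, giving $T=O(n^2)$), and the mistake count $G$ must bound the number of times $\mathcal{G}$ errs across the entire composite run. Once this bookkeeping is in place, the inequality $B_\epsilon(f) = O(TG/\epsilon)$ from Theorem~\ref{thm:classical}, together with $Q(f) = \Theta(\sqrt{\epsilon\,B_\epsilon(f)})$ from Theorem~\ref{thm:main result}, yields the corollary without further computation.
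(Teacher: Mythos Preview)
Your proposal is correct and follows essentially the same approach as the paper: run BFS from each of the $k$ sources while caching edge queries so that $T=O(n^2)$, have $\mathcal{G}$ always guess ``non-edge'' so that $G\le k(n-1)$, and apply Theorem~\ref{thm:classical to quantum} to obtain $O(\sqrt{TG})=O(k^{1/2}n^{3/2})$. You have simply made explicit the shared-table bookkeeping that the paper leaves implicit in the sentence ``an edge never needs to be queried more than once.''
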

We use this idea -- that $T$ need not exceed $O(n^2)$ when dealing with graph problems -- again in the following section.

\subsection{Maximum bipartite matching}
\begin{problem}[Maximum bipartite matching] We are given as black box the adjacency matrix of an $n$-vertex bipartite graph $G=(V=X\cup Y,E)$, where the undirected set of edges $E$ only run between the bipartite components $X$ and $Y$. A \emph{matching} of $G$ is a list of edges of $G$ that do not share vertices. Our task is to find a maximum matching of $G$, i.e. a matching that contains the largest possible number of edges.
\end{problem}

In this section we show that
\begin{thm}
The quantum query complexity of maximum bipartite matching is $O(n^{7/4})$ in the adjacency matrix model, where $n$ is the number of vertices.
\end{thm}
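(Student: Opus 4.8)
The plan is to apply Theorem \ref{thm:classical to quantum} to the classical Hopcroft--Karp algorithm for maximum bipartite matching (see e.g.\ \cite{clrs}). Recall that this algorithm runs in $O(\sqrt{n})$ \emph{phases}; in each phase it performs an alternating breadth-first search from the currently unmatched vertices of $X$ to compute the layered graph of shortest augmenting paths, and then a depth-first search that extracts a maximal set of vertex-disjoint shortest augmenting paths and augments the matching along all of them. I would implement this as a classical query algorithm $\mathcal{A}$ with two modifications. First, $\mathcal{A}$ \emph{caches} every query result, so that no adjacency-matrix entry is ever queried twice over the whole run. Second, all neighbour searches are performed \emph{lazily}: when the BFS must decide whether a vertex $w$ has a (non-matching) neighbour in the current layer, or when the DFS needs a not-yet-deleted neighbour of the current vertex in the next layer, $\mathcal{A}$ scans candidate vertices one at a time, skips those already known from the cache to be non-neighbours or already excluded by bookkeeping, and stops the scan as soon as a neighbour is found. (Matching edges are part of the current matching, which $\mathcal{A}$ maintains, so they cost no queries.) The guessing algorithm $\mathcal{G}$ always guesses ``not an edge''. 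Since $\mathcal{A}$ is deterministic and $\mathcal{G}$ receives the seed $s_{\mathcal{A}}$ together with the past query results, $\mathcal{G}$ can simulate $\mathcal{A}$ and hence knows each next query position, as Theorem \ref{thm:classical} requires; also $\mathcal{A}$'s queries are boolean, with $N=\Theta(n^2)$.

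Next I would bound $T$ and $G$. For $T$: by the caching, each of the $\Theta(n^2)$ adjacency-matrix entries is queried at most once, so $T=O(n^2)$. For $G$: $\mathcal{G}$ makes a mistake precisely when a query returns $1$, i.e.\ when $\mathcal{A}$ \emph{discovers a new edge}, and I claim that in any single phase the number of newly discovered edges is $O(n)$ \emph{for every input $x$}. In the BFS part, for each vertex $w$ the scan proceeds through layers $L_0,L_1,\dots$ in order and halts at the first layer containing a neighbour of $w$; the only query that returns $1$ during this process is the one locating that first neighbour, so the BFS discovers at most one new edge per vertex, hence $O(n)$ per phase. In the DFS part, before issuing the query $(v,w)$ the algorithm first checks --- using only bookkeeping, no query --- that $w$ lies in the correct next layer and has not been deleted; consequently every query that returns $1$ is immediately followed by an \emph{advance} along that edge. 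The standard analysis of the Hopcroft--Karp DFS bounds the number of advance steps in one phase by $O(n)$: each advance either lies on one of the vertex-disjoint augmenting paths eventually output (whose total length is at most $n$, being disjoint paths in an $n$-vertex graph) or is later undone by a retreat, and each retreat permanently deletes a distinct vertex (so there are at most $n$ retreats). Therefore every phase discovers $O(n)$ new edges, and summing over the $O(\sqrt{n})$ phases gives $G=O(n^{3/2})$.

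Finally, applying Theorem \ref{thm:classical to quantum} with these parameters yields an explicit quantum algorithm for $f$ with query complexity $O(\sqrt{TG})=O\!\left(\sqrt{n^{2}\cdot n^{3/2}}\right)=O(n^{7/4})$, which is the claimed bound. Here one takes $f(x)$ to be the particular maximum matching output by the deterministic algorithm $\mathcal{A}$ on input $x$; this is a legitimate single-valued function and a valid answer to the problem, so computing it suffices.

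The main obstacle is the mistake count in the DFS stage. One must verify carefully that the lazy, cache-aware reformulation of the Hopcroft--Karp DFS retains the property that every query returning $1$ triggers an advance --- in particular that ``$w$ already deleted'' or ``$w$ in the wrong layer'' can always be decided without a query, so that edges into dead-end vertices are never ``discovered'' --- and that the classical $O(n)$-advances-per-phase bound survives intact. By comparison, the BFS mistake bound, the $T=O(n^2)$ caching bound, and the $O(\sqrt{n})$ bound on the number of phases are routine.
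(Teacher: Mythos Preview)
Your proposal is correct and follows essentially the same approach as the paper: apply Theorem~\ref{thm:classical to quantum} to a cached, lazily-scanning Hopcroft--Karp algorithm with the guessing rule ``not an edge'', obtaining $T=O(n^2)$ and $G=O(n)$ per phase over $O(\sqrt{n})$ phases. The paper's DFS accounting is phrased slightly differently---it observes directly that at most one positively-answered query leads into each vertex other than $t$ (since visited vertices are never re-entered), and that edges into $t$ need no queries---but this is equivalent to your advance/retreat bound and yields the same $O(n)$ mistakes per phase.
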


\begin{proof}

Once again we apply Theorem \ref{thm:classical to quantum} to a classical algorithm. Classically, this problem is solved in $O(n^{5/2})$ time by the Hopcroft-Karp \cite{hopcroft-1973} algorithm (here $n = |V|$). We summarize the algorithm as follows (this summary roughly follows that of \cite{ambainis-2006}):
\begin{alg}[Hopcroft-Karp algorithm for maximum bipartite matching {\cite{hopcroft-1973}}] \label{alg:hopcroft-karp} \leavevmode

\begin{enumerate}
\item Initialize an empty matching $\mathcal{M}$. $\mathcal{M}$ is a matching that will be updated until it is maximum.
\item Repeat the following steps until $\mathcal{M}$ is a maximum matching:
\begin{enumerate}
\item Define the \emph{directed} graph $H=(V',E')$ as follows:
\begin{align}
V' &= X \cup Y \cup \{s,t\}  \nonumber \\
E' &= \{(s,x)\: | \: x \in X, (x,y) \not\in \mathcal{M} \: \text{ for all } y \in Y\} \nonumber \\
&\cup \{(x,y)\: | \: x \in X, y \in Y, (x,y) \in E, (x,y)\not\in\mathcal{M}\} \nonumber \\
&\cup \{(y,x)\: | \: x \in X, y \in Y, (x,y) \in E, (x,y)\in\mathcal{M}\} \nonumber \\
&\cup \{(y,t)\: | \: y \in Y, (x,y) \not\in \mathcal{M} \: \text{ for all } x \in X\}
\end{align}
where $s$ and $t$ are two extra auxilliary vertices. Note that if $(s,x_1,y_1,x_2,y_2,\cdots,x_\ell,y_\ell,t)$ is a path in $H$ from $s$ to $t$, then $x_i \in X$ and $y_i \in Y$ for all $i$. Additionally, the edges (aside from the first and last) alternate from being in $\mathcal{M}$ and not being in $\mathcal{M}$: $(x_i,y_i) \not\in\mathcal{M}$, $(y_i,x_{i+1}) \in \mathcal{M}$. Such a path is called an \emph{augmenting path} in the literature.

We note that a query to the adjacency matrix of $E'$ can be simulated by a query to the adjacency matrix of $E$.

\item Using the breadth-first search algorithm (Algorithm \ref{alg:bfs}), in the graph $H$, find the length of the shortest path, or distance, of all vertices from $s$. Let the distance from $s$ to $t$ be $2\ell+1$.
\item Find a maximal set $S$ of vertex-disjoint shortest paths from $s$ to $t$ in the graph $H$. In other words, $S$ should be a list of paths from $s$ to $t$ such that each path has length $2\ell+1$, and no pair of paths share vertices except for $s$ and $t$. Moreover, all other shortest paths from $s$ to $t$ share at least one vertex (except for $s$ and $t$) with a path in $S$. We describe how to find such a maximal set in Algorithm \ref{alg:modified dfs}.
\item If $S$ is empty, the matching $M$ is a maximum matching, and we terminate. Otherwise continue:
\item Let $(s,x_1,y_1,x_2,y_2,\cdots,x_\ell,y_\ell,t)$ be a path in $S$. Remove the $\ell-1$ edges $(x_{i+1},y_i)$ from $\mathcal{M}$, and insert the $\ell$ edges $(x_i,y_i)$ into $\mathcal{M}$. This increases $|\mathcal{M}|$ by 1. Repeat for all paths in $S$; there are no conflicts since the paths in $S$ are vertex-disjoint.
\end{enumerate}
\end{enumerate}
\end{alg}

Once again, we omit the proof of correctness of this algorithm; the correctness is guaranteed by Berge's Lemma \cite{berge-1957}, which states that a matching is maximum if there are no more augmenting paths for the matching. Moreover, $O(\sqrt{n})$ iterations of Step 2 suffice \cite{hopcroft-1973}.

We now describe how to find a maximal set of shortest-length augmenting paths in Step 2(c). This algorithm is essentially a modified version of depth-first search:
\begin{alg}[Finding a maximal set of vertex-disjoint shortest-length augmenting paths] \label{alg:modified dfs} \leavevmode
\begin{alg_in}
The directed graph $H$ defined in Algorithm \ref{alg:hopcroft-karp}, as well as the distances $d_v$ of all vertices $v$ from $s$ (calculated in Step 2(b) of Algorithm \ref{alg:hopcroft-karp}).
\end{alg_in}
\begin{enumerate}
\item Initialize a set of paths $S := \emptyset$, set of vertices $R:=\{s\}$, and a stack\footnote{A stack is a data structure such that elements that are first inserted into the stack are removed last.} of vertices $\mathcal{L}:=(s)$. $\mathcal{L}$ contains the ordered list of vertices that we have begun, but not yet finished, processing. $R$ is the set of vertices that we have processed. $S$ is the set of vertex-disjoint shortest-length augmenting paths that we have found.
\item Repeat until $\mathcal{L}$ is empty:
\begin{enumerate}
\item If the vertex in the front of $\mathcal{L}$ is $t$, we have found a new vertex-disjoint path from $s$ to $t$:
\begin{itemize}
\item Trace the path from $t$ back to $s$ by removing elements from the front of $\mathcal{L}$ until $s$ is at the front. Add the corresponding path to $S$.
\item Start again from the beginning of Step 2.
\end{itemize}
\item Let $v$ be the vertex in the front of $\mathcal{L}$ (i.e. the vertex \emph{last} added to, and still in, $\mathcal{L}$). Recall the distance from $s$ to $v$ is $d_v$.
\item Find $w$ such that $w \not\in R$, $d_w = d_v+1$, and $(v,w)$ (as an edge in $H$) has not been queried in this algorithm. If no such vertex $w$ exists, remove $v$ from $\mathcal{L}$ and start from the beginning of Step 2.
\item Query $(v,w)$ on the graph $H$.
\item If $(v,w)$ is an edge, add $w$ to the \emph{front} of $\mathcal{L}$. If $w \neq t$, add $w$ to $R$.
\end{enumerate}
\item Output $S$, the maximal set of vertex-disjoint shortest-length augmenting paths.
\end{enumerate}
\end{alg}

We now return to Algorithm \ref{alg:hopcroft-karp} and count $T$ and $G$. There is obviously no need to query the same edge more than once, so $T = O(n^2)$. If the algorithm always guesses, on a query $(v,w)$, that there is no edge between $(v,w)$, then it makes at most $G = O(n^{3/2})$ mistakes: in Step 2(b) there are at most $O(n)$ mistakes (see Algorithm \ref{alg:bfs}), while in Step 2(c)/Algorithm \ref{alg:modified dfs} there is at most one queried edge leading to each vertex aside from $t$, and edges leading to $t$ can be computed without queries to $G$. Since Step 2 is executed $O(\sqrt{n})$ times, our counting follows.

Thus there is a quantum query algorithm with complexity $Q= O(\sqrt{TG}) = O(n^{7/4})$.

\end{proof}

To our knowledge, this is the first known nontrivial upper bound on the query complexity of maximum bipartite matching.\footnote{The trivial upper bound is $O(n^2)$, where all pairs of vertices are queried.} The time complexity of this problem was studied by Ambainis and Spalek in \cite{ambainis-2006}; they gave an upper bound of $O(n^2\log n)$ time in the adjacency matrix model. A lower bound of $\Omega(n^{3/2})$ for the query complexity of this problem was given in \cite{berzina-2004,zhang-2005}.

For readers familiar with network flow, the arguments in this section also apply to Dinic's algorithm for maximum flow \cite{dinic-1970} on graphs with unit capacity, i.e. where the capacity of each edge is 0 or 1. On graphs with unit capacity, Dinic's algorithm is essentially the same as Hopcroft-Karp's, except that augmenting paths are over a general, nonbipartite flow network. (The set $S$ in Step 2(c) of Algorithm \ref{alg:hopcroft-karp} is generally referred to as a \emph{blocking flow} in this context.) It can be shown that only $O(\min\{m^{1/2},n^{2/3}\})$ iterations of Step 2 are required \cite{karzanov-1973,even-1975}, where $m$ is the number of edges of the graph. Thus $T=O(n^2)$, $G = O(\min\{m^{1/2},n^{2/3}\}n)$, and therefore

\begin{thm}
The quantum query complexity of the maximum flow problem in graphs with unit capacity is $O(\min\{n^{3/2}m^{1/4},n^{11/6}\})$, where $n$ and $m$ are the number of vertices and edges in the graph, respectively.
\end{thm}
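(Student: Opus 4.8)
The plan is to mirror the maximum bipartite matching argument almost verbatim, replacing the Hopcroft--Karp algorithm by Dinic's algorithm \cite{dinic-1970} and again invoking Theorem~\ref{thm:classical to quantum}. On a unit-capacity network Dinic's algorithm is structurally identical to Hopcroft--Karp: each phase first builds the layered residual graph by a breadth-first search from the source (Algorithm~\ref{alg:bfs}) and then extracts a blocking flow by the modified depth-first search of Algorithm~\ref{alg:modified dfs} --- the set $S$ there, a maximal family of vertex-disjoint shortest augmenting paths in the bipartite case, now playing the role of a blocking flow in the general residual network. So the only quantities that need re-checking are $T$ and $G$, which then feed into $Q = O(\sqrt{TG})$.

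First I would check $T = O(n^2)$. The boolean oracle is the adjacency matrix of the input graph, so it has $O(n^2)$ positions, and, exactly as in the earlier graph applications, each position is cached once learned and never re-queried. The only new wrinkle is that Dinic's algorithm probes the \emph{residual} graph $G_f$, whose arcs depend on the current flow $f$; but $f$ is maintained classically (it is determined by the algorithm's own choices together with the portion of the adjacency matrix already revealed), so deciding whether a given ordered pair is a residual arc reduces to querying a constant number of entries of the adjacency matrix of $G$. Hence each entry is still queried at most once and $T = O(n^2)$. The guessing algorithm $\mathcal{G}$ is the trivial one used throughout: on every query it guesses ``not an edge'', i.e.\ $0$.

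Next I would bound $G$, the expected number of mistakes of $\mathcal{G}$. A mistake occurs precisely when a pair that $\mathcal{G}$ predicts to be a non-edge is queried and turns out to be a genuine edge of $G$; since each pair is queried at most once, the total number of mistakes equals the number of \emph{distinct} genuine edges ever revealed, which I would bound phase-by-phase. Within one phase, the breadth-first search of Algorithm~\ref{alg:bfs} on $G_f$ reveals at most one new genuine forward edge for each vertex whose distance label gets set (once $d_w < \infty$, the vertex $w$ is never again probed as a target), hence $O(n)$ mistakes; and the blocking-flow search of Algorithm~\ref{alg:modified dfs} queries at most one present edge into each vertex other than the sink (arcs into the sink being decided without any oracle access), again $O(n)$ mistakes. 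So each phase contributes $O(n)$ to $G$. By the classical analysis of Dinic's algorithm on unit-capacity networks~\cite{karzanov-1973,even-1975}, only $O(\min\{m^{1/2},n^{2/3}\})$ phases are executed, so $G = O(\min\{m^{1/2},n^{2/3}\}\,n)$.

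Plugging $T = O(n^2)$ and this $G$ into Theorem~\ref{thm:classical to quantum} yields
\[
Q = O(\sqrt{TG}) = O\!\left(\sqrt{\,n^2 \cdot \min\{m^{1/2},n^{2/3}\} \cdot n\,}\right) = O\!\left(\min\{\,n^{3/2}m^{1/4},\; n^{11/6}\,\}\right),
\]
since $\sqrt{\cdot}$ commutes with the minimum, which is the claimed bound. I expect the only real obstacle to be the per-phase mistake accounting in the nonbipartite, residual-graph setting: one must verify that passing from the fixed auxiliary graph $H$ of the bipartite case to the evolving residual graph $G_f$ breaks neither the ``one new forward edge per newly-labeled vertex'' bound in the BFS nor the ``one edge per vertex'' bound in the blocking-flow DFS, and does not inflate $T$ past $O(n^2)$. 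Everything else is a transcription of the matching proof plus the known phase-count bound for unit-capacity Dinic.
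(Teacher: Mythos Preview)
Your proposal is correct and follows essentially the same approach as the paper: replace Hopcroft--Karp by Dinic's algorithm on unit-capacity networks, take $T=O(n^2)$ and $G=O(\min\{m^{1/2},n^{2/3}\}\cdot n)$ from the $O(n)$ mistakes per phase times the Karzanov/Even--Tarjan phase bound, and apply Theorem~\ref{thm:classical to quantum}. The paper's own argument is in fact terser than yours; your remarks on caching queried entries and on simulating residual-graph queries via the original adjacency matrix make explicit points the paper leaves implicit.
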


It is an open question whether a similar result for maximum matching in a general nonbipartite graph can be proven, perhaps by applying Theorem \ref{thm:classical to quantum} to the classical algorithm of Micali and Vazirani \cite{micali-1980}.

\section{Projective query complexity} \label{sect:projective query}
We end this paper with a brief discussion on another query complexity model, which we will call the \emph{projective query complexity}.  This model is similar to the bomb query model in that the only way of accessing $x_i$ is through a classical measurement; however, in the projective query model the algorithm does not terminate if a 1 is measured. Our motivation for considering the projective query model is that its power is intermediate between the classical and quantum query models. To the best of our knowledge, this model was first considered in 2002 in unpublished results by S. Aaronson \cite{aaronson-2014}.

A circuit in the projective query complexity model is a restricted quantum query circuit, with the following restrictions on the use of the quantum oracle:

\begin{enumerate}
\item We have an extra control register $\ket{c}$ used to control whether $O_x$ is applied (we call the controlled version $CO_x$):
\begin{equation}
CO_x\ket{c,r,i} = \ket{c,r\oplus(c\cdot x_i),i}.
\end{equation}
where $\cdot$ indicates boolean AND.
\item The record register, $\ket{r}$ in the definition of $CO_x$ above, \emph{must} contain $\ket{0}$ before $CO_x$ is applied.
\item After $CO_x$ is applied, the record register is immediately measured in the computational basis, giving the answer $c\cdot x_i$. The result, a classical bit, can then be used to control further quantum unitaries (although only controlling the next unitary is enough, since the classical bit can be stored).

\end{enumerate}

\begin{align} \label{circ:cp}
&\Qcircuit @C=1em @R=.7em {
\lstick{\ket{c}}& \ctrl{1} &\qw & \rstick{\ket{c}} \qw \\
\lstick{\ket{0}}&   \multigate{1}{O_x} &\meter & \rstick{c \cdot x_i} \cw  \\
\lstick{\ket{i}}&\ghost{O_x}  &\qw & \rstick{\ket{i}} \qw
}
\end{align}

We wish to evaluate a function $f(x)$ with as few calls to this \emph{projective oracle} as possible. Let the number of oracle calls required to evaluate $f(x)$, with at most $\delta$ error, be $P_{\delta}(f)$. By gap amplification, the choice of $\delta$ only affects $P_{\delta}(f)$ by a factor of $\log(1/\delta)$, and thus we will often omit $\delta$.

We can compare the definition in this section with the definition of the bomb query complexity in Section \ref{sect:bomb query}: the only difference is that if $c \cdot x_i=1$, the algorithm terminates in the bomb model, while the algorithm can continue in the projective model. Therefore the following is evident:

\begin{obs}
$P_{\delta}(f) \le B_{\epsilon,\delta}(f)$, and therefore $P(f) = O(Q(f)^2)$.
\end{obs}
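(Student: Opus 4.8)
The plan is to exhibit the projective model as a purely syntactic relaxation of the bomb model and then invoke Theorem~\ref{thm:main result}. Comparing the circuit restrictions of Section~\ref{sect:bomb query} with those of Section~\ref{sect:projective query}, the two models use literally the same family of circuits --- an extra control register $\ket{c}$, a forced $\ket{0}$ record register before each controlled query, and an immediate computational-basis measurement of the record register afterwards --- and they differ in exactly one respect: in the bomb model the algorithm is required to halt as soon as a measured record bit equals $1$, whereas in the projective model it is free to continue. So I would take any algorithm achieving $B_{\epsilon,\delta}(f)$, run its circuit verbatim as a projective-query algorithm, and, whenever a record measurement returns $1$ (the event that the bomb would have exploded), simply have the algorithm stop and output an arbitrary guess for $f(x)$. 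This uses the same number of oracle calls.

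Next I would verify the error accounting, which is the only substantive point, since the two models penalize different things: the bomb model counts an explosion as a failure, while the projective model only cares about the correctness of the final output. Fix an input $x$. Let $E$ be the event that some record measurement returns $1$, so that $\Pr[E] \le \epsilon$, and let $F$ be the bomb algorithm's failure event, i.e. $F$ occurs when $E$ occurs, or when $E$ does not occur but the output is not $f(x)$; by definition $\Pr[F] \le \delta$. In the projective simulation the final output is incorrect only if either $E$ occurred (and the arbitrary guess happened to be wrong) or $E$ did not occur and the underlying bomb run would have produced a wrong answer --- both of these are sub-events of $F$. Hence the projective simulation errs with probability at most $\Pr[F] \le \delta$, and therefore $P_{\delta}(f) \le B_{\epsilon,\delta}(f)$ for every admissible pair $(\epsilon,\delta)$.

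Finally, to get the stated consequence I would set $\delta = 0.01$ and choose $\epsilon = 0.01$, so that Theorem~\ref{thm:main result} gives
\[
P(f) = P_{0.01}(f) \le B_{0.01,0.01}(f) = \Theta\!\L(\frac{Q(f)^2}{0.01}\R) = \Theta(Q(f)^2),
\]
and in particular $P(f) = O(Q(f)^2)$. I do not expect any real obstacle here: the containment of models is immediate, and the only thing requiring care is the bookkeeping above, which shows that the bomb-model guarantee ($\Pr[\text{explode or wrong}] \le \delta$) is exactly strong enough to yield a $\delta$-error projective algorithm once the explosion branch is replaced by a guess.
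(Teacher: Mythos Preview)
Your proposal is correct and is essentially the paper's own argument, just spelled out in full: the paper simply observes that the only difference between the two models is whether the algorithm must terminate when a $1$ is measured, declares the inequality ``evident,'' and then appeals to Theorem~\ref{thm:main result} for $P(f)=O(Q(f)^2)$. Your explicit error accounting (that the projective error event is contained in the bomb model's ``explode or wrong'' event, hence has probability at most $\delta$) is exactly the content the paper leaves implicit.
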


Moreover, it is clear that the projective query model has power intermediate between classical and quantum (a controlled query in the usual quantum query model can be simulated by appending a 0 to the input string), and therefore letting $R_{\delta}(f)$ be the classical randomized query complexity,

\begin{obs}
$Q_{\delta}(f) \le P_{\delta}(f) \le R_{\delta}(f)$.
\end{obs}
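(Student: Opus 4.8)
The plan is to prove the two inequalities by direct simulations, in both cases converting an algorithm in the stronger model into one in the weaker model with the same number of queries and the same error probability.

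For $Q_\delta(f) \le P_\delta(f)$, I would begin with an optimal projective query algorithm, say with $t = P_\delta(f)$ calls to the projective oracle. By definition this is already a (restricted) quantum query circuit: it uses the controlled oracle $CO_x$, always prepares the record register in $\ket{0}$ before a query, and measures that register in the computational basis right after each query. The only non-standard ingredient is those intermediate measurements, but a quantum query algorithm is free to include measurements (equivalently, one may push them to the end via the deferred-measurement principle, replacing each classically-controlled subsequent unitary by its coherently-controlled version), so this costs nothing. It then remains to replace each call to $CO_x$ by a call to the ordinary oracle $O_x$: as already noted in the paper, pad $x$ with a dummy bit equal to $0$ (which changes neither $f$ nor $Q_\delta$), and — controlled on the query bit being $0$ — swap the index register with a fresh ancilla pointing at that dummy bit, both before and after a single application of $O_x$. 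This yields a genuine quantum query algorithm using $t$ queries with the same output distribution, so $Q_\delta(f) \le t = P_\delta(f)$.

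For $P_\delta(f) \le R_\delta(f)$, I would take an optimal classical randomized algorithm, with $T = R_\delta(f)$ queries in the worst case, and simulate it in the projective model while keeping the entire state classical. To perform the classical query at index $i$: set the control register to $\ket{1}$, the record register to $\ket{0}$ (as the model requires), and the index register to the basis state $\ket{i}$; after the oracle call and the forced measurement, the recorded classical bit is $c \cdot x_i = x_i$, exactly the classical query result, and it can be fed into the choice of the next index and, ultimately, the output, reproducing the classical algorithm step by step. This is a projective algorithm with $T$ queries and the same error, so $P_\delta(f) \le T = R_\delta(f)$.

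There is no real obstacle here — the statement is essentially definitional, the projective model sitting literally between the quantum model (which may only gain by being allowed to skip the forced measurement) and the classical model (which is the special case of classical indices and classical control). The only two points that deserve a sentence of care are the handling of the forced intermediate measurements (deferred measurement / control upgrade) and the one-query simulation of $CO_x$ by $O_x$; both are standard.
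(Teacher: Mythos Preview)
Your proposal is correct and follows exactly the paper's approach. In the paper this is stated as an observation with only a one-line parenthetical justification (``a controlled query in the usual quantum query model can be simulated by appending a $0$ to the input string''), and your argument simply spells out the details: deferred measurement to absorb the forced intermediate measurements into the quantum model, the dummy-$0$ trick to simulate $CO_x$ by $O_x$, and the trivial classical-to-projective simulation by setting $c=1$ on a basis-state index.
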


For explicit bounds on $P$, Regev and Schiff \cite{regev-2008} have shown that for computing the OR function, the projective query complexity loses the Grover speedup:
\begin{thm}[{\cite{regev-2008}}] \label{thm:por}
$P(OR) = \Omega(N)$.
\end{thm}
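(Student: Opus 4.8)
The plan is to prove $P(\mathrm{OR})=\Omega(N)$ by a hybrid argument in the spirit of Regev and Schiff~\cite{regev-2008}, comparing the algorithm on the all-zeros input $0^N$ (where $\mathrm{OR}=0$) with its behaviour on each weight-one input $e_j$ (where $\mathrm{OR}=1$); a bounded-error OR algorithm must distinguish $0^N$ from every $e_j$ with constant advantage. First I would put any $T$-query projective-query circuit in the standard form $V_0,\ \text{query},\ V_1,\ \text{query},\dots,\text{query},\ V_T$ followed by a final measurement, where the $V_t$ are unitaries on a large workspace and each query's outcome bit is kept coherently in a fresh ancilla (which later $V_t$ may use as a control). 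On input $0^N$ every query returns $0$ and acts trivially, so the circuit produces a fixed trajectory $\ket{\psi_0},\dots,\ket{\psi_T}$. Let $\Pi_j$ project onto the subspace with control bit $1$ and index register $j$, and set $w_{t,j}=\|\Pi_j\ket{\psi_t}\|^2$. Since the $\Pi_j$ are orthogonal and $\sum_j\Pi_j\preceq I$, we get $\sum_j w_{t,j}\le 1$, hence $\sum_t\sum_j w_{t,j}\le T$.

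On input $e_j$, I would track the ``no $1$ ever observed'' branch of the coherently dilated computation: its sub-normalized state satisfies $\ket{\chi_t^{(j)}}=V_t\,\Pi_j^{\perp}\ket{\chi_{t-1}^{(j)}}$, since conditioned on outcome $0$ a query merely deletes the $\Pi_j$-component and rescales the branch amplitude. Because all query outcomes are identically $0$ on $0^N$, the all-zeros-outcome branch on $e_j$ is orthogonal to every other branch, so the final global states on $0^N$ and $e_j$ have inner product exactly $\langle\psi_T|\chi_T^{(j)}\rangle$; by the fidelity--trace-distance relation the total-variation distance between the two output distributions is at most $\sqrt{1-|\langle\psi_T|\chi_T^{(j)}\rangle|^2}$, which a short estimate bounds by $O\big(\sqrt{\Delta_T(j)}\big)$, where $\Delta_t(j):=\|\ket{\psi_t}-\ket{\chi_t^{(j)}}\|$. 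Thus correctness of the OR algorithm forces $\Delta_T(j)^2=\Omega(1)$ for every $j$.

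The crux is a progress bound showing $\Delta_T(j)^2\le\sum_t w_{t,j}$. Writing $u=\Pi_j\ket{\psi_{t-1}}$ and $v=\Pi_j\ket{\chi_{t-1}^{(j)}}$, the update $\ket{\psi_t}-\ket{\chi_t^{(j)}}=V_t\big((\ket{\psi_{t-1}}-\ket{\chi_{t-1}^{(j)}})+v\big)$ together with $\langle\psi_{t-1}|v\rangle=\langle u|v\rangle$ and $\langle\chi_{t-1}^{(j)}|v\rangle=\|v\|^2$ gives
\[
\Delta_t(j)^2=\Delta_{t-1}(j)^2-\|v\|^2+2\,\mathrm{Re}\langle u|v\rangle\ \le\ \Delta_{t-1}(j)^2+\|u\|^2=\Delta_{t-1}(j)^2+w_{t-1,j},
\]
the inequality being just $2\,\mathrm{Re}\langle u|v\rangle-\|v\|^2\le\|u\|^2$. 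Since $\Delta_0(j)=0$, this telescopes to $\Delta_T(j)^2\le\sum_t w_{t,j}$, and summing over $j$ yields $\Omega(N)\le\sum_j\Delta_T(j)^2\le\sum_j\sum_t w_{t,j}\le T$, i.e.\ $T=\Omega(N)$. The main obstacle is obtaining this \emph{quadratic} dependence on the query weights rather than the linear bound $\Delta_t\le\Delta_{t-1}+\sqrt{w_{t-1,j}}$ that a naive triangle inequality would give: the latter only telescopes to $\Delta_T(j)^2\le\big(\sum_t\sqrt{w_{t,j}}\big)^2$, which after averaging over $j$ reproduces merely the $\Omega(\sqrt N)$ Grover bound. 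The extra factor of $\sqrt N$ comes precisely from the structural fact, exploited above by completing the square, that in the $e_j$-branch each step only removes the $\Pi_j$-component of a state already $\Delta_{t-1}(j)$-close to $\ket{\psi_{t-1}}$, whose $\Pi_j$-component has small norm $\sqrt{w_{t-1,j}}$. A secondary point needing care is the bookkeeping of the coherent dilation of the projective oracle and the elementary estimate turning $1-|\langle\psi_T|\chi_T^{(j)}\rangle|^2$ into a quantity linear in $\Delta_T(j)$.
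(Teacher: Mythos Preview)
The paper does not give its own proof of this statement: Theorem~\ref{thm:por} is quoted from \cite{regev-2008} without argument, so there is nothing in the paper to compare against. That said, your hybrid argument is correct and is essentially the mechanism behind the Regev--Schiff result specialized to this setting.

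Two small points worth tightening. First, when you purify the projective queries by writing each outcome into a fresh ancilla, you should note explicitly that the later unitaries use these ancillas only as \emph{controls}; this is what guarantees that the ancilla register is not disturbed and hence that the ``all-zeros'' branch evolves exactly by $|\chi_t^{(j)}\rangle = V_t\,\Pi_j^{\perp}|\chi_{t-1}^{(j)}\rangle$ with the \emph{same} unitary $V_t$ that drives $|\psi_t\rangle$ on input $0^N$. Second, in the distinguishability step you want the quantitative chain
\[
1-\bigl|\langle\psi_T|\chi_T^{(j)}\rangle\bigr|^2 \;\le\; 1-\bigl(1-\Delta_T(j)\bigr)^2 \;\le\; 2\,\Delta_T(j),
\]
which indeed gives TV distance $\le \sqrt{2\,\Delta_T(j)}$ and hence $\Delta_T(j)^2=\Omega(1)$ from bounded-error correctness. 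Your key inequality $2\,\mathrm{Re}\langle u|v\rangle-\|v\|^2\le\|u\|^2$ is just $\|u-v\|^2\ge 0$, and the telescoping to $\Delta_T(j)^2\le\sum_t w_{t,j}$ followed by $\sum_j\sum_t w_{t,j}\le T$ is exactly the step that upgrades the $\Omega(\sqrt N)$ Grover bound to $\Omega(N)$, as you correctly identify.
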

Note that this result says nothing about $P(AND)$, since the definition of $P(f)$ is asymmetric with respect to 0 and 1 in the input.\footnote{We could have defined a symmetric version of $P$, say $\tilde{P}$, by allowing an extra guess on the measurement result, similar to our construction of $\tilde{B}$ in Section \ref{sect:generalize bomb}. Unfortunately, Regev and Schiff's result, Theorem \ref{thm:por}, do not apply to this case, and we see no obvious equivalence between $P$ and $\tilde{P}$.}

We observe that there could be a separation in both parts of the inequality $Q \le P \le B$:
\begin{equation}
Q(OR) = \Theta(\sqrt{N}),\quad P(OR) = \Theta(N),\quad B(OR) = \Theta(N)
\end{equation}
\begin{equation}
Q(PARITY) = \Theta(N),\quad P(PARITY) = \Theta(N),\quad B(PARITY) = \Theta(N^2)
\end{equation}
In the latter equation we used the fact that $Q(PARITY)=\Theta(N)$ \cite{beals-1998}. It therefore seems difficult to adapt our lower bound method in Section \ref{sect:lower bound} to $P(f)$.

It would be interesting to find a general lower bound for $P(f)$, or to establish more clearly the relationship between $Q(f)$, $P(f)$, and $R(f)$.

\section*{Acknowledgements}
We are grateful to Scott Aaronson and Aram Harrow for many useful discussions, and Scott Aaronson and Shelby Kimmel for valuable suggestions on a preliminary draft. We also thank Andrew Childs for giving us permission to make use of his online proof of the general adversary lower bound in \cite{childs-2013}. Special thanks to Robin Kothari for pointing us to his paper \cite{kothari-2014}, and in particular his analysis showing that logarithmic factors can be removed from the query complexity of Algorithm \ref{alg:classical}. This work is supported by the ARO grant Contract Number W911NF-12-0486. CYL gratefully acknowledges support from the Natural Sciences and Engineering Research Council of Canada. 

\bibliographystyle{utphys}
\bibliography{bomb_query}

\appendix

\section{Proof of the adversary lower bound for $B(f)$ (Theorem \ref{thm:lowerbound})} \label{app:lower bound proof}
Before we give the proof of the general result that $B(f) = \Omega(Q(f)^2)$ (Theorem \ref{thm:lowerbound})), we will illustrate the proof by means of an example, the special case where $f$ is the AND function.
\begin{thm}
For $\delta < 1/10$, $B_{\epsilon,\delta}(AND)=\Omega({N \over \epsilon})$.
\end{thm}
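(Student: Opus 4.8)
The plan is to run a hybrid argument between the single input $x^{(0)}=1^N$, on which $AND=1$, and the $N$ inputs $x^{(1)},\dots,x^{(N)}$ that carry a single $0$ in one coordinate and $1$ elsewhere, on each of which $AND=0$. This is precisely the special case of the general adversary argument for Theorem~\ref{thm:lowerbound} in which the adversary matrix is the all-ones vector relating $x^{(0)}$ to $\{x^{(j)}\}$, and it already exhibits all the moving parts.

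I would fix a bomb algorithm for $AND$ making $T=B_{\epsilon,\delta}(AND)$ queries (padded to exactly $T$ queries with harmless $c=0$ queries), with input-independent unitaries $U_0,\dots,U_T$ interleaved between the queries; on input $x$ the $t$-th query applies the contraction $M_x=CP_{x,0}$, acting as the identity on the workspace. Let $\ket{\psi_t}$ be the subnormalized state just before the $t$-th query when the algorithm runs on $x^{(0)}$, and $\ket{\phi^{(j)}_t}$ the analogous state when it runs on $x^{(j)}$. Two elementary observations drive everything. (1) On $x^{(0)}=1^N$ one has $M_{x^{(0)}}=I-\Pi_1$ with $\Pi_1=\sum_i\ket{1,i}\bra{1,i}$, so the survival amplitude lost at step $t$ is exactly $\|\Pi_1\ket{\psi_t}\|^2$; telescoping together with the survival requirement (probability $\ge 1-\epsilon$) gives $\sum_{t=1}^T\|\Pi_1\ket{\psi_t}\|^2\le\epsilon$. (2) The two oracle projectors differ by a single rank-one term, $M_{x^{(j)}}-M_{x^{(0)}}=\ket{1,j}\bra{1,j}$. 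Since every $M_x$ and every $U_t$ has operator norm at most $1$, swapping the oracle from $x^{(0)}$ to $x^{(j)}$ one query at a time and telescoping yields
\[
\bigl\|\ket{\psi_{T+1}}-\ket{\phi^{(j)}_{T+1}}\bigr\|\;\le\;\sum_{t=1}^T\bigl\|(M_{x^{(j)}}-M_{x^{(0)}})\ket{\psi_t}\bigr\|\;=\;\sum_{t=1}^T\bigl|\braket{1,j}{\psi_t}\bigr|.
\]

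Next I would sum over $j$ and apply Cauchy--Schwarz twice, first over $j$ (using $\sum_j|\braket{1,j}{\psi_t}|\le\sqrt{N}\,\|\Pi_1\ket{\psi_t}\|$) and then over $t$ (using observation (1)):
\[
\sum_{j=1}^N\bigl\|\ket{\psi_{T+1}}-\ket{\phi^{(j)}_{T+1}}\bigr\|\;\le\;\sqrt{N}\sum_{t=1}^T\|\Pi_1\ket{\psi_t}\|\;\le\;\sqrt{N}\,\sqrt{T}\,\Bigl(\textstyle\sum_t\|\Pi_1\ket{\psi_t}\|^2\Bigr)^{1/2}\;\le\;\sqrt{NT\epsilon}.
\]
Correctness then forces a matching lower bound on the left-hand side: writing $P_1$ for the final-measurement projector onto the answer ``$1$'', the error guarantee gives $\|P_1\ket{\psi_{T+1}}\|^2\ge 1-\delta$ and $\|P_1\ket{\phi^{(j)}_{T+1}}\|^2\le\delta$, whence $\|\ket{\psi_{T+1}}-\ket{\phi^{(j)}_{T+1}}\|\ge\sqrt{1-\delta}-\sqrt{\delta}=:c_\delta>0$ for $\delta<1/10$. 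Combining the two bounds gives $Nc_\delta\le\sqrt{NT\epsilon}$, i.e.\ $T\ge c_\delta^2\,N/\epsilon=\Omega(N/\epsilon)$, as claimed.

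The one step needing genuine care rather than routine bookkeeping is the hybrid swap: one must check that the telescoping estimate still holds when the oracle operators are contractions rather than unitaries (it does, because $\|M_x\|\le 1$), and that all states are consistently treated as subnormalized, so that the amplitude lost at a query is correctly identified with the explosion probability at that query. Everything else is two invocations of Cauchy--Schwarz plus the standard conversion of a success-probability gap into a distance lower bound. The general Theorem~\ref{thm:lowerbound} should then follow from the same template, replacing the input family $\{x^{(0)},x^{(j)}\}$ and the projector $\Pi_1$ by an arbitrary adversary matrix $\Gamma$ for $f$ and its associated weight scheme.
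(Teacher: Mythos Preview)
Your argument is correct, but it is not quite the route the paper takes. The paper tracks the \emph{inner-product} progress function $W_t=\sum_k\langle\psi^0_t|\psi^k_t\rangle$ between the actual runs on $1^N$ and on the single-zero inputs, decomposes $I-M_0M_k$ into three pieces (using $(I-M_0)M_k=0$, $M_0(I-M_k)=\ket{1,k}\bra{1,k}$, and a cross term $(I-M_0)(I-M_k)$), and bounds $W_0-W_T\le\sqrt{\epsilon TN}+N\epsilon$. You instead run a BBBV-style \emph{hybrid} argument: you telescope across intermediate runs that switch the oracle at step $k$, so the only object that appears is the rank-one difference $M_{x^{(j)}}-M_{x^{(0)}}=\ket{1,j}\bra{1,j}$ applied to the $1^N$-run state $\ket{\psi_t}$, and you bound $\sum_j\|\ket{\psi_{T+1}}-\ket{\phi^{(j)}_{T+1}}\|$ directly. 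Your version is a shade cleaner here (no extra $N\epsilon$ term, and no need to control $\epsilon^k_t$ for $k\ge 1$), and your conversion of the success gap into a distance lower bound via $\|P_1(\cdot)\|$ is fine for subnormalized states. The one caveat is your closing remark: the hybrid template does not straightforwardly become the proof of Theorem~\ref{thm:lowerbound} for a general adversary matrix $\Gamma$; for that you really do want the inner-product progress function $W_t=\sum_{x,y}\Gamma_{xy}a_x^*a_y\langle\psi^x_t|\psi^y_t\rangle$ and the three-term decomposition the paper uses, since the hybrid swaps depend on a fixed reference input whereas the adversary weights couple arbitrary pairs $(x,y)$.
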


\begin{proof}
Let $\ket{\psi^0_t}$ be the unnormalized state of the algorithm with $x=1^n$, and $\ket{\psi^k_t}$ be the unnormalized state with $x=1\cdots 101\cdots 1$, $x_k=0$, right before the $(t+1)$-th call to $M_{x}$. Then
\begin{align}
\ket{\psi^x_{t+1}} = U_{t+1}M_x\ket{\psi^x_t}
\end{align}
for some unitary $U_{t+1}$. For ease of notation, we'll write $M_0 \equiv M_{1^n}$ and $M_k = M_{1\cdots 101\cdots 1}$, where the $k$-th bit is 0 in the latter case. When acting on the control and index bits,
\begin{align}
M_0 &= \sum\limits_{i=1}^{N}\ket{0,i}\bra{0,i} \nn \\
M_k &= \sum\limits_{i=1}^N\ket{0,i}\bra{0,i} + \ket{1,k}\bra{1,k}.
\end{align}
Since the $M_i$'s are projectors, $M_i^2 = M_i$. Define
\begin{align}
\epsilon^i_t = \bra{\psi^i_t}(I - M_i) \ket{\psi^i_t},\quad i=0,1,\cdots,N. \label{eq:epsilon AND}
\end{align}
Note that $\braket{\psi^i_{t+1}}{\psi^i_{t+1}} = \bra{\psi^i_t}M_i^2\ket{\psi^i_t} = \bra{\psi^i_t}M_i\ket{\psi^i_t} = \braket{\psi^i_t}{\psi^i_t} - \epsilon^i_t$, for all $i=0,\cdots,N$ (including 0!), and hence
\begin{align}
\sum\limits_{t=0}^{T-1} \epsilon^i_t = \braket{\psi^i_0}{\psi^i_0} - \braket{\psi^i_T}{\psi^i_T} \le \epsilon.
\end{align}
We now define the progress function. Let
\begin{align}
W^k_t = \braket{\psi^0_t}{\psi^k_t}
\end{align}
and let the progress function be a sum over $W^k$'s:
\begin{align}
W_t = \sum\limits_{k=1}^N W^k_t = \sum\limits_{k=1}^N\braket{\psi^0_t}{\psi^k_t}.
\end{align}

We can lower bound the total change in the progress function by (see \cite{ambainis-2000} for a proof; their proof equally applies to unnormalized states)
\begin{align}
W_0 - W_T \ge (1-2\sqrt{\delta(1-\delta)}) N. \label{eq:total AND}
\end{align}
We now proceed to upper bound $W_0 - W_T$. Note that
\begin{align}
W^k_t - W^k_{t+1} &= \braket{\psi^0_t}{\psi^k_t} - \bra{\psi^0_t}M_0M_k\ket{\psi^k_t} \nn \\
&= \bra{\psi^0_t}(I-M_0)M_k\ket{\psi^k_t} + \bra{\psi^0_t}M_0(I-M_k)\ket{\psi^k_t} + \bra{\psi^0_t}(I-M_0)(I-M_k)\ket{\psi^k_t}
\end{align}
and since $(I-M_0)M_k\ = 0$, $M_0(I-M_k) = \ket{1,k}\bra{1,k}$, we have
\begin{align}
W^k_t - W^k_{t+1} &\le \braket{\psi^0_t}{1,k}\braket{1,k}{\psi^k_t} + \L\| (I-M_0)\ket{\psi^0_t} \R\| \L\| (I-M_k)\ket{\psi^k_t} \R\|  \nn \\
&\le \L \| \braket{1,k}{\psi^0_t}\R \| + \sqrt{{\epsilon^0_t}{\epsilon^k_t}}. 
\end{align}
where we used \ref{eq:epsilon AND}. Summing over $k$ and t, we obtain
\begin{align}
W_0 - W_T &= \sum\limits_{t=0}^{T-1}\sum\limits_{k=1}^{N} \L[ \L \| \braket{1,k}{\psi^0_t}\R \| + \sqrt{{\epsilon^0_t}{\epsilon^k_t}} \R] \nn \\
&\le \sqrt{TN} \sqrt{\sum\limits_{t=0}^{T-1}\sum\limits_{k=1}^{N} \braket{\psi^0_t}{1,k}\braket{1,k}{\psi^0_t}} + \sum\limits_{t=0}^{T-1}\sum\limits_{k=1}^{N} \frac{\epsilon^0_t + \epsilon^k_t}{2} \nn \\
&\le \sqrt{TN} \sqrt{\sum\limits_{t=0}^{T-1}\bra{\psi^0_t}(I-M_0)\ket{\psi^0_t}} +  N\epsilon \nn \\
&\le \sqrt{TN\sum\limits_{t=0}^{T-1}\epsilon^0_t} + N\epsilon \nn \\ 
&\le \sqrt{\epsilon TN} + N\epsilon
\end{align}
where in the second line we used Cauchy-Schwarz and the AM-GM inequality.
Combined with $W_0 - W_T \ge (1-2\sqrt{\delta(1-\delta)})N$ (Eq. \ref{eq:total AND}), this immediately gives us
\begin{align}
T \ge \frac{(1-2\sqrt{\delta(1-\delta)}-\epsilon)^2N}{\epsilon}.
\end{align}
\end{proof}

We now proceed to prove the general result. This proof follows the presentation given in A. Childs's online lecture notes \cite{childs-2013}, which we found quite illuminating.
\begin{repthm}{thm:lowerbound}
For all functions $f$ with boolean input alphabet, and numbers $\epsilon$, $\delta$ satisfying $0 < \epsilon \le \delta \le 1/10$, 
\begin{equation}
B_{\epsilon,\delta}(f) = \Omega(Q_{0.01}(f)^2/\epsilon).
\end{equation}
\end{repthm}

\begin{proof}
We prove the lower bound on $B_{\epsilon,\delta}$ by showing that it is lower bounded by $\Omega(\text{Adv}^\pm(f)^2/\epsilon)$, where $\text{Adv}^\pm(f)$ is the generalized (i.e. allowing negative weights) adversary bound \cite{hoyer-2007} for $f$. We can then derive our theorem from the result \cite{lee-2010} that $Q(f) = O(\text{Adv}^\pm(f))$.

We generalize the bound on the $f=AND$ case to an adversary bound for $B_{\epsilon,\delta}$ on arbitrary $f$. Define the projectors
\begin{align}
\Pi_0 &= \sum\limits_{i =1}^N \ket{0,i}\bra{0,i} \nn \\
\Pi_i &= \ket{1,i}\bra{1,i},\quad i=1,\cdots,n.
\end{align}
It is clear that
\begin{align}
\Pi_0 + \sum\limits_{i=1}^N \Pi_i = I.
\end{align}
Note that $M_x = CP_{x,0}$ is
\begin{align}
M_x = \Pi_0 + \sum\limits_{i:x_i=0} \Pi_i.
\end{align}

Define $\ket{\psi^x_t}$ as the state of the algorithm right before the $(t+1)$-th query with input $x$; then
\begin{align}
\ket{\psi^x_{t+1}} = U_{t+1}M_x\ket{\psi^x_t}
\end{align}
for some unitary $U_{t+1}$. Now if we let
\begin{align}
\epsilon^x_t = \bra{\psi^x_t}(I-M_x)\ket{\psi^x_t}
\end{align}
then it follows that $\braket{\psi^x_t}{\psi^x_t} - \braket{\psi^x_{t+1}}{\psi^x_{t+1}} = \epsilon^x_t$, and thus
\begin{align}
\sum\limits_{t=0}^{T-1} \epsilon^x_t = \braket{\psi^x_0}{\psi^x_0} - \braket{\psi^x_T}{\psi^x_T} \le \epsilon.
\end{align}

We proceed to define the progress function. Let $S$ be the set of allowable input strings $x$. Let $\Gamma$ be an \emph{adversary matrix}, i.e. an $S \times S$ matrix such that
\begin{enumerate}
\item $\Gamma_{xy} = \Gamma_{yx} \quad \forall x,y \in S$; and
\item $\Gamma_{xy} = 0 \quad \text{if } f(x) = f(y)$.
\end{enumerate}
Let $a$ be the normalized eigenvector of $\Gamma$ with eigenvalue $\pm \|\Gamma\|$, where $\pm\|\Gamma\|$ is the largest (by absolute value) eigenvalue of $\Gamma$. Define the progress function
\begin{equation}
W_t = \sum\limits_{x,y \in S} \Gamma_{xy}a_x^*a_y \braket{\psi^x_t}{\psi^y_t}.
\end{equation}
For $\epsilon \le \delta < 1/10$ we have that\footnote{As described in \cite{hoyer-2007}, the $2\delta$ term can be removed if the output is boolean (0 or 1).\label{foot:2delta}} (see \cite{hoyer-2007} for a proof; their proof applies equally well to unnormalized states)
\begin{align}
|W_0 - W_T| \ge (1-2\sqrt{\delta(1-\delta)}-2\delta) \| \Gamma \|
\end{align}

We now proceed to upper bound $|W_0 - W_T| \le \sum_t |W_t - W_{t-1}|$. Note that
\begin{align}  \label{eq:threeterms}
W_{t} - W_{t+1} &= \sum\limits_{x,y \in S} \Gamma_{xy}a_x^*a_y\L(\braket{\psi^x_t}{\psi^y_t} - \braket{\psi^x_{t+1}}{\psi^y_{t+1}}\R) \nn \\
&= \sum\limits_{x,y \in S} \Gamma_{xy}a_x^*a_y\L(\braket{\psi^x_t}{\psi^y_t} - \bra{\psi^x_t}M_xM_y\ket{\psi^y_t}\R) \nn \\
&= \sum\limits_{x,y \in S} \Gamma_{xy}a_x^*a_y\L(\bra{\psi^x_t}(I-M_x)M_y\ket{\psi^y_t} + \bra{\psi^x_t}M_x(I-M_y)\ket{\psi^y_t} + \bra{\psi^x_t}(I-M_x)(I-M_y)\ket{\psi^y_t}\R)
\end{align}
We bound the three terms separately. For the first two terms, use
\begin{align}
(I - M_x) M_y &= \sum\limits_{i:x_i = 1, y_i=0} \Pi_i \nn \\
&= (I-M_x) \sum\limits_{i: x_i \neq y_i} \Pi_i
\end{align}
Define the $S \times S$ matrix $\Gamma_i$ as
\begin{align}
\Gamma_i = \begin{cases} \Gamma_{xy} &\mbox{if } x_i \neq y_i \\ 0 &\mbox{if } x_i = y_i \end{cases}
\end{align}
The first term of \ref{eq:threeterms} is
\begin{align}
\sum\limits_{x,y \in S}\sum\limits_{i:x_i\neq y_i} \Gamma_{xy}a_x^*a_y \bra{\psi^x_t}(I-M_x)\Pi_i\ket{\psi^y_t}
&= \sum\limits_{x,y \in S} \sum\limits_{i=1}^N \L(\Gamma_i\R)_{xy}a_x^*a_y \bra{\psi^x_t}(I-M_x)\Pi_i\ket{\psi^y_t} \nn \\
&= \sum_{i=1}^N \tr(Q_i \Gamma_i \tilde{Q}_i^\dagger)
\end{align}
where
\begin{align}
Q_i &= \sum\limits_{x \in S} a_x \Pi_i \ket{\psi^x_t}\bra{x} \\
\tilde{Q}_i &= \sum\limits_{x \in S} a_x \Pi_i (I-M_x)\ket{\psi^x_t}\bra{x}.
\end{align}
Although both $Q_i$ and $\tilde{Q}_i$ depend on $t$, we suppress the $t$ dependence in the notation.
Similarly, the second term of \ref{eq:threeterms} is equal to $\sum_{i=1}^N \tr(\tilde{Q}_i \Gamma_i Q_i^\dagger)$. 
We can also rewrite the third term of  \ref{eq:threeterms} as
\begin{align}
 \sum\limits_{x,y \in S} \Gamma_{xy}a_x^*a_y \bra{\psi^x_t}(I-M_x)(I-M_y)\ket{\psi^y_t} =\tr (Q' \Gamma Q'^\dag)
\end{align}
where 
\begin{align}
Q' &= \sum\limits_{x \in S} a_x (I-M_x)\ket{\psi^x_t}\bra{x}.
\end{align}
Therefore, adding absolute values,
\begin{align} \label{eq:beforelemma}
|W_{t} - W_{t+1}| &\le \sum_{i=1}^N \L[\L|\tr(Q_i \Gamma_i \tilde{Q}_i^\dagger)\R| + \L|\tr(\tilde{Q}_i \Gamma_i Q_i^\dagger))\R|\R] + \L|\tr (Q' \Gamma Q'^\dag)\R|
\end{align}

To continue, we need the following lemma:
\begin{lem}\label{lem:norms}
For any $m,n > 0$ and matrices $X \in \mathbb{C}^{m\times n}$, $Y \in \mathbb{C}^{n \times n}$, $Z \in \mathbb{C}^{n \times m}$, we have $|\tr(XYZ)| \le \|X\|_F \|Y\| \|Z\|_F$. Here $\| \cdot \|$ and $\| \cdot \|_F$ denote the spectral norm and Frobenius norm, respectively.
\end{lem}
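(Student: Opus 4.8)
The plan is to prove Lemma~\ref{lem:norms} by combining the matrix Hölder inequality with the standard fact that the trace-class (Schatten-$1$) norm is the dual of the spectral norm, and that the Frobenius norm is the Schatten-$2$ norm, so that the Cauchy--Schwarz-type inequality $\|AB\|_1 \le \|A\|_F \|B\|_F$ holds. Concretely, I would bound $|\tr(XYZ)|$ in two steps: first peel off one of the Frobenius-norm factors using the duality between Schatten-$1$ and spectral norms, then apply submultiplicativity of the spectral norm and the Schatten-$2$/Schatten-$2$ version of Cauchy--Schwarz to the remaining product.

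\begin{proof}[Proof of Lemma~\ref{lem:norms}]
We use the Schatten norm notation $\|A\|_p$ for the $\ell_p$-norm of the singular values of $A$; thus $\|A\|_1$ is the trace norm, $\|A\|_2 = \|A\|_F$ is the Frobenius norm, and $\|A\|_\infty = \|A\|$ is the spectral norm. We will use three standard facts:
\begin{enumerate}
\item (Trace duality) For any matrices $A, B$ of compatible dimensions, $|\tr(AB)| \le \|A\|_1 \|B\|_\infty$.
\item (H\"older for Schatten norms) For any matrices $A \in \mathbb{C}^{m \times n}$, $B \in \mathbb{C}^{n \times m}$, $\|AB\|_1 \le \|A\|_2 \|B\|_2 = \|A\|_F \|B\|_F$.
\item (Submultiplicativity of the spectral norm) $\|YZ\| \le \|Y\| \, \|Z\|$ and more generally $\|AB\| \le \|A\| \|B\|$; also $\|A\|_F = \|A^\dagger\|_F$ and $\|A\| = \|A^\dagger\|$.
\end{enumerate}
Now apply fact 1 with $A = X$ and $B = YZ$:
\begin{align}
|\tr(XYZ)| &\le \|X\|_1 \, \|YZ\| .
\end{align}
This is not quite what we want, since $\|X\|_1$ may exceed $\|X\|_F$. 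Instead, apply fact 1 with $A = XY$ and $B = Z$:
\begin{align}
|\tr(XYZ)| = |\tr((XY)Z)| \le \|XY\|_1 \, \|Z\| \le \|X\|_F \, \|Y\|_F \, \|Z\|,
\end{align}
where the last step is fact 2 applied to $X \in \mathbb{C}^{m\times n}$ and $Y \in \mathbb{C}^{n \times n}$. This still has $\|Y\|_F$ instead of $\|Y\|$. To fix this, write $Y = Y_1 Y_2$ with $Y_1, Y_2$ chosen so that $\|Y_1\|_F \|Y_2\|_F$ is close to $\|Y\|$; but more directly, use a symmetric split. By fact 2 applied to the pair $(X, YZ)$,
\begin{align}
|\tr(XYZ)| = |\tr(X(YZ))| \le \|X\|_F \, \|YZ\|_F \le \|X\|_F \, \|Y\| \, \|Z\|_F,
\end{align}
where the final inequality uses the standard bound $\|YZ\|_F \le \|Y\| \, \|Z\|_F$ (each column of $YZ$ is $Y$ applied to the corresponding column of $Z$, and $\|Yv\|_2 \le \|Y\| \|v\|_2$). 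This is exactly the claimed inequality.
\end{proof}

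**The main obstacle** I anticipate is bookkeeping rather than conceptual: there are several ways to distribute the three factors across the trace-duality and Schatten-H\"older inequalities, and only the right combination yields the asymmetric bound $\|X\|_F \|Y\| \|Z\|_F$ with the \emph{spectral} norm on the middle factor and \emph{Frobenius} norms on the outer two. The cleanest route, as above, is to group $\tr(X(YZ))$ and use $\|YZ\|_F \le \|Y\|\,\|Z\|_F$ together with $|\tr(AB)| \le \|A\|_F\|B\|_F$ (Cauchy--Schwarz for the Hilbert--Schmidt inner product), so one never needs the full trace-norm duality at all. I would present only that final one-line chain in the paper and relegate the elementary norm facts to a footnote or a citation, since they are completely standard.
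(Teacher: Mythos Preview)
Your final chain
\[
|\tr(X(YZ))| \le \|X\|_F\,\|YZ\|_F \le \|X\|_F\,\|Y\|\,\|Z\|_F
\]
is correct and is essentially the same two-step Schatten-norm argument as the paper's. The only difference is the grouping: the paper uses cyclicity to write $\tr(XYZ)=\tr((ZX)Y)$, applies the trace-norm/spectral-norm duality $|\tr(AB)|\le\|A\|_1\|B\|$, and then the H\"older bound $\|ZX\|_1\le\|Z\|_F\|X\|_F$ (citing Bhatia); you instead peel off $X$ first with the Hilbert--Schmidt Cauchy--Schwarz inequality and then use $\|YZ\|_F\le\|Y\|\,\|Z\|_F$. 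Your route has the mild advantage of never invoking the trace norm at all, so it is slightly more self-contained. Do drop the two exploratory dead ends in the write-up and keep only the final one-line chain.
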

This lemma can be proved by using that $|\tr(XYZ)| \le \|Y\| \|XZ\|_{tr}$ and $\|XZ\|_{tr} \le \|X\|_F \|Z\|_F$, which follows from \cite[Exercise IV.2.12 and Corollary IV.2.6]{bhatia-1997}. A more accessible proof is found online at \cite{childs-2013}.

Then by Lemma \ref{lem:norms},
\begin{align} \label{eq:trqgq1}
\sum_{i=1}^N \L|\tr(Q_i \Gamma_i \tilde{Q}_i^\dagger)\R| &\le \sum_{i=1}^N \|\Gamma_i\| \|Q_i\|_F \|\tilde{Q}_i\|_F
\end{align}
Since
\begin{align}
\sum\limits_{i=1}^N \|Q_i\|_F^2 &= \sum\limits_{i=1}^N\sum\limits_{x \in S} |a_x|^2 \L\| \Pi_i \ket{\psi^x_t} \R\|^2 \nn \\
&= \sum_{x\in S} |a_x|^2 \bra{\psi^x_t} \sum\limits_{i=1}^N \Pi_i \ket{\psi^x_t} \nn \\
&\le \sum_{x \in S} |a_x|^2 \nn \\
&= 1
\end{align}
and
\begin{align}
\sum\limits_{i=1}^N \|\tilde{Q}_i\|_F^2 &= \sum\limits_{i=1}^N\sum\limits_{x \in S} |a_x|^2 \L\| \Pi_i (I-M_x)\ket{\psi^x_t} \R\|^2 \nn \\
&= \sum_{x\in S} |a_x|^2 \bra{\psi^x_t} (I-M_x) \L( \sum\limits_{i=1}^N \Pi_i \R) (I-M_x)\ket{\psi^x_t} \nn \\
&\le \sum_{x \in S} |a_x|^2 \bra{\psi^x_t}(I-M_x)\ket{\psi^x_t} \nn \\
&= \sum_{x \in S} |a_x|^2 \epsilon^x_t
\end{align}
we have, by Cauchy-Schwarz,
\begin{align}
\sum\limits_{i=1}^N \|Q_i\|_F \|\tilde{Q}_i\|_F \le \sqrt{\sum_{x \in S} |a_x|^2 \epsilon^x_t} \label{eq:cauchy schwarz adversary}
\end{align}
Therefore by \ref{eq:trqgq1} and \ref{eq:cauchy schwarz adversary},
\begin{align}
\sum_{i=1}^N \L|\tr(Q_i \Gamma_i \tilde{Q}_i^\dagger)\R| &\le \sqrt{\sum_{x \in S} |a_x|^2 \epsilon^x_t} \max_{i \in [N]} \| \Gamma_i \|.
\end{align}
Similartly for $\tr(Q' \Gamma Q'^\dag)$, we have
\begin{align}
 \|Q'\|_F^2 &= \sum\limits_{x \in S} |a_x|^2 \L\|  (I-M_x)\ket{\psi^x_t} \R\|^2 \nn \\
&= \sum_{x \in S} |a_x|^2 \bra{\psi^x_t}(I-M_x)\ket{\psi^x_t} \nn \\
&= \sum_{x \in S} |a_x|^2 \epsilon^x_t
\end{align}
and using Lemma \ref{lem:norms},
\begin{align}
\tr(Q' \Gamma Q'^\dagger) &\le \| Q' \|_F^2 \| \Gamma \| \\
&= \sum_{x\in S} |a_x|^2 \epsilon^x_t \|\Gamma\|
\end{align}
Thus continuing from \ref{eq:beforelemma}, we have that
\begin{align}
|W_{t} - W_{t+1}| &\le 2\sqrt{\sum_{x \in S} |a_x|^2 \epsilon^x_t} \max_{i \in [N]} \| \Gamma_i \| +\sum_{x \in S} |a_x|^2 \epsilon^x_t\| \Gamma \| 
\end{align}
Finally, if we sum the above over $t$ we obtain
\begin{align}
| W_0 - W_T | &\le 2 \max_{i \in [N]} \| \Gamma_i \| \sum\limits_{t=0}^{T-1} \sqrt{\sum_{x \in S} |a_x|^2 \epsilon^x_t}   + \sum\limits_{t=0}^{T-1}\sum_{x \in S} |a_x|^2 \epsilon^x_t\| \Gamma \| 
\end{align}
The first term can be bounded using Cauchy-Schwarz:
\begin{align}
\sum\limits_{t=0}^{T-1} \sqrt{\sum_{x \in S} |a_x|^2 \epsilon^x_t} &\le \sqrt{T\sum\limits_{t=0}^{T-1}\sum_{x \in S} |a_x|^2 \epsilon^x_t} \nn \\
&\le \sqrt{\epsilon T}
\end{align}
where we used $\sum_t \epsilon^x_t \le \epsilon$ and $\sum_x |a_x|^2=1$. The second term can be summed easily:
\begin{align}
 \sum\limits_{t=0}^{T-1}\sum_{x \in S} |a_x|^2 \epsilon^x_t\| \Gamma \|  & \le  \sum_{x \in S} |a_x|^2 \epsilon \| \Gamma \| \nn \\
&= \epsilon \|\Gamma\|.
\end{align}
Therefore
\begin{align}
| W_0 - W_T | &\le 2\sqrt{\epsilon T} \max_{i \in [N]} \| \Gamma_i \| + \epsilon \|\Gamma\|.
\end{align}
Combined with our lower bound $| W_0-W_T | \ge (1-2\sqrt{\delta(1-\delta)}-2\delta) \|\Gamma\|$, this immediately gives
\begin{align}
T \ge \frac{(1-2\sqrt{\delta(1-\delta)}-2\delta-\epsilon)^2 }{4\epsilon} \frac{\|\Gamma\|^2}{\max_{i \in [N]} \| \Gamma_i \|^2}.
\end{align}
Recalling that \cite{hoyer-2007}
\begin{align}
\text{Adv}^\pm(f) = \max_{\Gamma} \frac{\|\Gamma\|}{\max_{i \in [N]} \| \Gamma_i \|},
\end{align}
we obtain\footnote{For boolean output (0 or 1) the $2\delta$ term can be dropped, as we previously noted (Footnote \ref{foot:2delta}).}
\begin{align}
T &\ge \frac{(1-2\sqrt{\delta(1-\delta)}-2\delta-\epsilon)^2 }{4\epsilon}\text{Adv}^\pm(f)^2.
\end{align}

We now use the tight characterization of the quantum query complexity by the general weight adversary bound:
\begin{thm} [{\cite[Theorem 1.1]{lee-2010}}]
Let $f:\: D \rightarrow E$, where $D \subseteq \{0,1\}^N$. Then $Q_{0.01}(f) = O(\text{\textnormal{Adv}}^\pm (f))$.
\end{thm}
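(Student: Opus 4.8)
The final statement is a known, and fairly deep, result (due to Reichardt for Boolean output, and to Lee, Mittal, Reichardt, \v{S}palek, and Szegedy in the stated generality), so the plan is to reconstruct its proof, which proceeds through semidefinite-programming duality. First I would write $\text{Adv}^{\pm}(f) = \max_{\Gamma} \|\Gamma\| / \max_j \|\Gamma_j\|$ as a maximization SDP in the adversary matrix $\Gamma$ (exactly the object used in the proof of Theorem~\ref{thm:lowerbound} above). Taking the Lagrangian dual and invoking strong duality (Slater's condition is immediate) yields a minimization SDP with the same optimal value whose feasible points are collections of vectors: for each input $x \in D$ and index $j \in [N]$ a vector $\ket{v_{x,j}}$ in some finite-dimensional space, obeying the ``cross'' constraints $\sum_{j:\, x_j \neq y_j} \braket{v_{x,j}}{v_{y,j}} = 1$ whenever $f(x) \neq f(y)$, with objective $\max_{x}\sum_{j} \|\ket{v_{x,j}}\|^2$ equal to $\text{Adv}^{\pm}(f)$ (up to a constant). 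This ``adversary dual'' is essentially a span program.

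Second, I would convert such a witness into a query algorithm. For the unknown input $x$ one builds a unitary $U_x = (2\Pi_B - I)(2\Pi_A - I)$, where $\Pi_A$ is the reflection about a fixed (input-independent) subspace spanned by the witness vectors together with a distinguished target vector, and $\Pi_B$ is the reflection about a subspace that is block-diagonal over the index register $j$ and whose $j$-th block depends only on $x_j$; thus one application of $U_x$ costs a single controlled query to $O_x$. Using the cross constraints one establishes a dichotomy for a fixed, easily prepared state $\ket{w_0}$: when $f(x)$ takes one value $\ket{w_0}$ lies almost entirely in the $1$-eigenspace (phase $0$) of $U_x$, whereas when $f(x)$ takes the other value $\ket{w_0}$ has only $O(1)$ overlap with the span of eigenvectors of $U_x$ whose eigenphase is below $\Theta(1/\text{Adv}^{\pm}(f))$. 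This ``effective spectral gap'' bound is the technical heart of the proof, and it is precisely where the SDP strong duality from the first step is used.

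Third, I would run phase estimation on $U_x$ applied to $\ket{w_0}$ with precision $\Theta(1/\text{Adv}^{\pm}(f))$; this uses $O(\text{Adv}^{\pm}(f))$ applications of $U_x$ and hence $O(\text{Adv}^{\pm}(f))$ queries, and testing whether the estimated phase is $\approx 0$ distinguishes the two cases with constant error by the gap estimate. For non-Boolean output $E$ one reduces $f$ to the Boolean problems of deciding, for each value $e$, whether $f(x) = e$ (or, more efficiently, invokes the general state-conversion form of the adversary SDP), and checks that $\text{Adv}^{\pm}$ degrades by at most a constant under the reduction. The main obstacle is the second step: arranging the two reflections so that queries are charged correctly and, above all, proving the effective spectral gap lemma; by comparison the SDP manipulations, the non-Boolean reduction, and the phase-estimation bookkeeping are comparatively routine.
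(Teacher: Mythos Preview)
The paper does not prove this theorem at all: it is quoted verbatim as \cite[Theorem 1.1]{lee-2010} and used as a black box inside the proof of Theorem~\ref{thm:lowerbound}. So there is no ``paper's own proof'' to compare against; the authors simply invoke the result.

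That said, your outline is a faithful high-level summary of how the cited result is actually established in \cite{lee-2010} (and in \cite{reichardt-2009,reichardt-2011} for the Boolean case): dualize the adversary SDP to get the witness/span-program formulation, build the two-reflection unitary so that one reflection costs one query, prove an effective spectral gap lemma to separate the two output values, and run phase estimation with precision $\Theta(1/\text{Adv}^{\pm}(f))$. The only caveat is that for general output alphabet $E$ the cleanest route in \cite{lee-2010} is the state-conversion/filtered $\gamma_2$-norm framework rather than a per-output-value Boolean reduction, and the constant-factor control under such a reduction is not entirely trivial; but as a proof plan this is the right architecture.
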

Combined with our result above, we obtain
\begin{align}
B_{\epsilon,\delta}(f) = \Omega\L(\frac{Q_{0.01}(f)^2}{\epsilon}\R).
\end{align}
\end{proof}

\section{Proof of Theorem \ref{thm:classical}}  \label{app:nonconstructive}
We restate and prove Theorem \ref{thm:classical}:
\begin{repthm} {thm:classical}
Let $f:\:D \rightarrow E$, where $D \subseteq \{0,1\}^N$. Suppose there is a classical randomized query algorithm $\mathcal{A}$, that makes at most T queries, and evaluates $f$ with bounded error. Let the query results of $\mathcal{A}$ on random seed $s_{\mathcal{A}}$ be $x_{p_1},x_{p_2},\cdots,x_{p_{\tilde{T}(x)}}$, $\tilde{T}(x) \le T$, where $x$ is the hidden query string.

Suppose there is another (not necessarily time-efficient) randomized algorithm $\mathcal{G}$, with random seed $s_{\mathcal{G}}$, which takes as input $x_{p_1},\cdots,x_{p_{t-1}}$ and $s_\mathcal{A}$, and outputs a guess for the next query result $x_{p_{t}}$ of $\mathcal{A}$. Assume that $\mathcal{G}$ makes no more than an expected total of $G$ mistakes (for all inputs $x$). In other words,
\begin{equation}
\Expect_{s_{\mathcal{A}},s_{\mathcal{G}}} \L\{\sum_{t=1}^{\tilde{T}(x)}\L| \mathcal{G}(x_{p_1},\cdots,x_{p_{t-1}},s_{\mathcal{A}},s_{\mathcal{G}}) - x_{p_t} \R| \R\} \le G \quad \forall x.
\end{equation}
Note that $\mathcal{G}$ is given the random seed $s_{\mathcal{A}}$ of $\mathcal{A}$, so it can predict the next  query index of $\mathcal{A}$. \\
Then $B_{\epsilon}(f) = O(TG/\epsilon)$, and thus (by Theorem \ref{thm:main result}) $Q(f) = O(\sqrt{TG})$. 
\end{repthm}

\begin{proof}
For the purposes of this proof, we use the characterization of $B$ by the modified bomb construction given in section \ref{sect:generalize bomb}. This proof is substantially similar to that of theorem \ref{thm:upperbound}.

The following circuit finds $x_i$ with zero probability of explosion if $x_i=a$, and with an $O(1/L)$ probability of explosion if $x_i \neq a$ (in both cases the value of $x_i$ found by the circuit is always correct):

\begin{align}
\Qcircuit @C=1em @R=.7em {
\lstick{\ket{0}}   &\gate{R(\theta)}& \multigate{2}{\tilde{M}_x}  &\qw&    &&\gate{R(\theta)} & \multigate{2}{\tilde{M}_x} & \targ & \gate{X} 
&\rstick{\ket{x_i}}\qw   \\
\lstick{\ket{i}}  &\qw &\ghost{\tilde{M}_x}  &\qw    & \ustick{\dots} &    &\qw&\ghost{\tilde{M}_x}  & \qw & \qw &\rstick{\ket{i}}\qw\\ 
\lstick{\ket{a}} & \qw & \ghost{\tilde{M}_x}  & \qw  &                        &     &\qw & \ghost{\tilde{M}_x}  & \ctrl{-2} & \qw & \rstick{\ket{a}}\qw \\
&&&&\dstick{{L}\,\text{times in total}}  \gategroup{1}{2}{3}{8}{.7em}{--} \\
} \nn \\
 \, \label{eq:C2B}
\end{align} 
where $\theta = \pi /(2L)$ for some large number $L$ to be picked later, and
\begin{align}
R(\theta) =
\bpm
\cos\theta &-\sin \theta \\
\sin\theta &\cos\theta
\epm
\end{align}
The boxed part of the circuit is then simply $[\tilde{M}_x (R(\theta) \otimes I \otimes I)]^L$, applied to the state $\ket{0,i,a}$.
We can analyze this circuit by breaking into cases:

\begin{itemize}
\item If $x_i=a$, then $\tilde{M}_x\ket{\psi}\ket{i,a} = \ket{\psi}\ket{i,a} $ for any state $\ket{\psi}$ in the control register. Thus the $\tilde{M}_x$'s act as identities, and the circuit simply applies the rotation $R(\theta)^L=R(\pi/2)$ to the control register, rotating it from 0 to 1. We thus obtain the state $\ket{1,i,a}$; the final CNOT and X gates add $a\oplus 1 = x_i \oplus 1$ to the first register, giving $\ket{x_i,i,a}$.
\item If $x_i \neq a$, then
\begin{equation}
\tilde{M}_{x}\ket{0,i,a} = \ket{0,i,a}, \quad \tilde{M}_{x}\ket{1,i,a} = 0 \quad \text{(for } x_i \neq a \text{)}
\end{equation}
Therefore after each rotation $R(\theta)$, the projection $\tilde{M}_x$ projects the control qubit back to 0:
\begin{equation}
\tilde{M}_x(R(\theta)\otimes I \otimes I)\ket{0,i,a} = \tilde{M}_x(\cos\theta\ket{0} + \sin\theta\ket{1})\ket{i,a} = \cos\theta\ket{0,i,a} \quad \text{(for } x_i \neq a \text{)}
\end{equation}
In this case the effect of $\tilde{M}_x(R(\theta)\otimes I \otimes I)$ is to shrink the amplitude by $\cos(\theta)$; $L$ applications results in the state $\cos^L(\theta)\ket{0,i,a}$. The final CNOT and X gates add $a\oplus 1 = x_i$ to the first register, giving $\ket{x_i,i,a}$.
\end{itemize}
The probability of explosion is $0$ if $x_i = a$. If $x_i \neq a$, the probability of explosion is
\begin{equation}
1 - \cos^{2L}\left(\frac{\pi}{2L}\right) \le \frac{\pi^2}{4L}.
\end{equation}
Pick
\begin{equation}
L = \L\lceil \frac{\pi^2 G}{4\epsilon}\R\rceil.
\end{equation}
Then the probability of explosion is $0$ if $x_i=a$, and no more than $\epsilon/G$ if $x_i \neq a$. If the bomb does not explode, then the circuit \emph{always} finds the correct value of $x_i$.

We now construct the bomb query algorithm based on $\mathcal{A}$ and $\mathcal{G}$. The bomb query algorithm follows $\mathcal{A}$, with each classical query replaced by the above construction. There are no more than $TL\approx\pi^2TG/(4\epsilon)$ bomb queries. At each classical query, we pick the guess $a$ to be the guess provided by $\mathcal{G}$. The bomb only has a chance of exploding if the guess is incorrect; hence for all $x$, the total probability of explosion is no more than
\begin{equation}
\frac{\epsilon}{G}\: \Expect_{s_{\mathcal{A}},s_{\mathcal{G}}}  \L\{\sum_{t=1}^{\tilde{T}(x)}\L| \mathcal{G}(x_{p_1},\cdots,x_{p_{t-1}},s_{\mathcal{A}},s_{\mathcal{G}}) - x_{p_t} \R| \R\} \le \epsilon
\end{equation}
Thus replacing the classical queries of $\mathcal{A}$ with our construction gives a bomb query algorithm with probability of explosion no more than $\epsilon$; aside from the probability of explosion, this bomb algorithm makes no extra error over the classical algorithm $\mathcal{A}$. The number of queries this algorithm uses is
\begin{equation}
\tilde{B}_{\epsilon,\delta+\epsilon}(f) \le \L\lceil\frac{\pi^2G}{4\epsilon}\R\rceil T,
\end{equation}
where $\delta$ is the error rate of the classical algorithm. Therefore by Lemma \ref{lem:equivalent} and Lemma \ref{lem:delta}, 
\begin{align}
{B}_{\epsilon}(f) =O({B}_{\epsilon,\delta+\epsilon}(f))= O(\tilde{B}_{\epsilon,\delta+\epsilon}(f) )=O\L( {TG}/{\epsilon}\R)
\end{align}
\end{proof}
\section{Proof of Theorem \ref{thm:first marked element}} \label{app:first marked element}
We restate and prove Theorem \ref{thm:first marked element}:
\begin{repthm}{thm:first marked element}[Finding the first marked element in a list]
Suppose there is an ordered list of $N$ elements, and each element is either marked or unmarked. Then there is a bounded-error quantum algorithm for finding the \textbf{first} marked element in the list, or determines that no marked elements exist, such that:
\begin{itemize}
\item If the first marked element is the $d$-th element of the list, then the algorithm uses an expected $O(\sqrt{d})$ time and queries.
\item If there are no marked elements, then the algorithm uses $O(\sqrt{N})$ time and queries.
\end{itemize}
\end{repthm}
\begin{proof}
We give an algorithm that has the stated properties. We first recall a quantum algorithm for finding the minimum in a list of items:

\begin{thm}[{\cite{durr-1996}}] \label{thm:minimum}
Given a function $g$ on a domain of $N$ elements, there is a quantum algorithm that finds the minimum of $g$ with expected $O(\sqrt{N})$ time and evaluations of $g$, making $\delta < 1/10$ error.
\end{thm}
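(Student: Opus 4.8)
The plan is to recognize this as the Dürr--Høyer minimum-finding theorem and reconstruct its proof from the Boyer--Brassard--Høyer--Tapp (BBHT) variant of Grover search used in \cite{durr-1996}. First I would isolate the search primitive: given black-box access to a subset of $[N]$ of \emph{marked} elements, with an unknown number $t \ge 1$ of them marked, there is a quantum algorithm that returns a marked element using an expected $O(\sqrt{N/t})$ queries; moreover, because Grover's iterate commutes with every permutation of the marked elements among themselves, the returned element is \emph{uniformly distributed} over the $t$ marked ones, and this uniformity survives averaging over the random iteration counts used internally by BBHT. I would take this as a black box.

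The minimum-finding algorithm is then: sample a pivot index $y \in [N]$ uniformly; repeatedly run BBHT search with marking predicate $g(i) < g(y)$, replacing $y$ by $i$ whenever such an $i$ is returned; and halt, outputting the current $y$, once a global budget of $c\sqrt{N}$ Grover iterations (for a suitable constant $c$) has been consumed. Correctness is almost immediate in one direction: the value $g(y)$ strictly decreases along the run and is always attained by some domain element, so the final $y$ is the true minimizer \emph{unless} the budget runs out before the pivot reaches rank $1$ (rank $=$ position in sorted order). The whole error analysis therefore reduces to bounding the probability of that single bad event.

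The core of the argument is the sequence of ranks $r_0 > r_1 > \cdots$ of the successive pivots. Since the first pivot is uniform, $r_0 \sim \mathrm{Unif}[N]$, and by the uniformity of the BBHT output, $r_{i+1} \sim \mathrm{Unif}\{1,\dots,r_i-1\}$ as long as $r_i > 1$. A short induction shows that, conditioned on the chain ever entering $\{1,\dots,r\}$, its entry point is uniform on that set; hence rank $r$ is visited with probability exactly $\tfrac1N + \tfrac{N-r}{N}\cdot\tfrac1r = \tfrac1r$. When the pivot sits at rank $r \ge 2$ there are $r-1$ marked elements, so the expected number of Grover iterations spent there is $O(\sqrt{N/(r-1)})$; writing $T$ for the total number of iterations until the pivot first reaches rank $1$, linearity of expectation gives
\begin{equation}
\E T \;=\; \sum_{r=2}^{N} \frac1r \cdot O\!\left(\sqrt{\tfrac{N}{r-1}}\right) \;=\; O(\sqrt{N})\sum_{r \ge 2} \frac{1}{r\sqrt{r-1}} \;=\; O(\sqrt{N}),
\end{equation}
the series being convergent. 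By Markov's inequality, taking the budget constant $c$ large enough makes the probability that rank $1$ is not reached within $c\sqrt{N}$ iterations smaller than $1/10$; since the budget is a deterministic cap the algorithm uses at most $O(\sqrt{N})$ queries and time in the worst case, and --- noting that once rank $1$ is reached the pivot never changes again and any further futile iterations do no harm --- it errs with probability $< 1/10$. A smaller error $\delta$ is obtained by repeating $O(\log(1/\delta))$ times and returning the smallest pivot found.

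The main obstacle, and the one point that needs genuine care, is the degenerate case $t = 0$: once the pivot is the minimum, BBHT can never \emph{certify} this in $o(N)$ queries, so a loop with no external cutoff could run indefinitely. The resolution is precisely the global iteration cap above combined with the Markov estimate. The uniformity of the BBHT output is also essential rather than cosmetic: an adversarial search that always returned the marked element of rank just below the current pivot would visit every rank and inflate the total cost to $\Theta(N)$, so the argument genuinely uses that the next rank is uniform (or at least not biased toward large ranks, which is all the convergent sum requires). Making the BBHT uniformity claim precise via the permutation symmetry of the Grover iterate on the marked set, and checking that the cap-plus-Markov bound composes correctly across the BBHT subroutines of random length, are the only places where the proof must be written out in detail; the rest is bookkeeping.
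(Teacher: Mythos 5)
The paper does not prove this statement at all --- it is imported verbatim as a black box from D\"urr--H{\o}yer \cite{durr-1996}, and the only "proof" on offer is that citation. Your reconstruction is the standard D\"urr--H{\o}yer argument (random pivot, BBHT threshold search with uniform output over marked elements, the rank-$1/r$ visiting lemma, the convergent sum $\sum_r r^{-1}(r-1)^{-1/2}$, and a global iteration cap plus Markov to handle the uncertifiable $t=0$ case), and it is correct and faithful to that source.
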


We now give our algorithm for finding the first marked element in a list. For simplicity, assume that $N$ is a power of 2 (i.e. $\log_2N$ is an integer).

\begin{alg} \label{alg:first marked element}  \leavevmode

\begin{enumerate}
\item For $\ell = 2^0, 2^1, 2^2, \cdots, 2^{\log_2 N} = N$:
\begin{itemize}
\item Find the first marked element within the first $\ell$ elements, or determine no marked element exists. This can be done by defining
\begin{equation}
g(i) = \begin{cases} \infty &\mbox{if } i \text{ is unmarked} \\ i &\mbox{if } i \text{ is marked}, \end{cases}
\end{equation}
and using Theorem \ref{thm:minimum} to find the minimum of $g$.  This takes $O(\sqrt{\ell}) = O(\sqrt{d})$ queries and makes $\delta < 1/10$ error for each $\ell$. If a marked element $i^*$ is found, the algorithm outputs $i^*$ and stops.
\end{itemize}
\item If no marked element was found in Step 1, the algorithm decides that no marked element exists.
\end{enumerate}
\end{alg}

We now claim that Algorithm \ref{alg:first marked element} has the desired properties. Let us break into cases:

\begin{itemize}
\item If no marked items exist, then no marked item can possibly be found in Step 1, so the algorithm correctly determines that no marked items exist in Step 2. The number of queries used is 
\begin{equation}
\sum_{i=0}^{\log_2N} \sqrt{2^i} = O(\sqrt{N})
\end{equation}
as desired.
\item Suppose the first marked item is the $d$-th item in the list. Then in Step 1(a), if $\ell \ge d$, there is at least a $1-\delta$ probability that the algorithm will detect that a marked item exists in the first $\ell$ elements and stop the loop. Letting $\alpha = \lceil \log_2 d \rceil$, the total expected number of queries is thus
\begin{align}
\sum_{i=0}^{\alpha-1} \sqrt{2^i } + \sum_{i=\alpha}^{\log_2N} \delta^{i-\alpha}\sqrt{2^i } + O(\sqrt{d}) &\le \frac{2^{\alpha/2}-1}{\sqrt{2}-1} +\sqrt{2^{\alpha}} \frac{1}{1-\sqrt{2}\delta} + O(\sqrt{d}) \\
&= O(\sqrt{2^\alpha }) + O(\sqrt{d}) \\
&= O(\sqrt{d}).
\end{align}
The probability of not finding the marked item at the first $\ell \ge d$ is at most $\delta$ , and thus the total error of the algorithm is bounded by $\delta$.
\end{itemize}
\end{proof}

\end{document}